\algrenewcommand\algorithmicrequire{\textbf{Input:}}
\algrenewcommand\algorithmicensure{\textbf{Output:}}
\newtheorem{theorem}{Theorem}
\theoremstyle{definition}
\newcommand{\E}{\mathbb{E}}
\DeclareMathOperator*{\argmax}{arg\,max}
\DeclareMathOperator*{\argmin}{arg\,min}
\definecolor{revision_color}{HTML}{00008B}
\begin{document}


\begin{frontmatter}
\title{Improved control of Dirichlet location and scale near the boundary}
\runtitle{Improved control of Dirichlet location and scale near the boundary}


\begin{aug}
\author[A]{\fnms{Catherine}~\snm{Xue}\ead[label=e1]{csxue@g.harvard.edu}\orcid{0000-0002-5518-122X}}
\author[B]{\fnms{Alessandro}~\snm{Zito}\ead[label=e2]{azito@hsph.harvard.edu}\orcid{0000-0001-8515-6503}},
\and
\author[C]{\fnms{Jeffrey W.}~\snm{Miller}\ead[label=e3]{jwmiller@hsph.harvard.edu}\orcid{0000-0001-7718-1581}}
\address[A]{Department of Biostatistics,
Harvard University\printead[presep={,\ }]{e1}}
\address[B]{Department of Biostatistics,
Harvard University\printead[presep={,\ }]{e2}}
\address[C]{Department of Biostatistics,
Harvard University\printead[presep={,\ }]{e3}}

\runauthor{C. Xue, A. Zito, and J. W. Miller}
\end{aug}

\begin{abstract}
Dirichlet distributions are commonly used for modeling vectors in a probability simplex. When used as a prior or a proposal distribution, it is natural to set the mean of a Dirichlet to be equal to the location where one wants the distribution to be centered.  However, if the mean is near the boundary of the probability simplex, then a Dirichlet distribution becomes highly concentrated either (i) at the mean or (ii) extremely close to the boundary.
Consequently, centering at the mean provides poor control over the location and scale near the boundary.
In this article, we introduce a method for improved control over the location and scale of Beta and Dirichlet distributions. Specifically, given a target location point and a desired scale, we maximize the density at the target location point while constraining a specified measure of scale. 
We consider various choices of scale constraint, such as fixing the concentration parameter, the mean cosine error, or the variance in the Beta case.
In several examples, we show that this maximum density method provides superior performance for constructing priors, defining Metropolis-Hastings proposals, and generating simulated probability vectors.
\end{abstract}


\begin{keyword}
\kwd{Beta distribution}
\kwd{Bayesian inference}
\kwd{Conjugate priors}
\kwd{Dirichlet distribution}
\kwd{Markov chain Monte Carlo}
\end{keyword}

\end{frontmatter}


\section{Introduction}

The family of Dirichlet distributions is the default choice when modeling probability vectors, due to its flexibility and analytical tractability.
As a conjugate prior for the parameters of a multinomial distribution, the Dirichlet plays a central role in many Bayesian models such as latent Dirichlet allocation \citep{blei2003latent}, finite and infinite mixture models \citep{Rousseau_Mengersen_2011, Richardson_green_1997, Miller_Harrison_2018}, 
admixture models \citep{pritchard2000inference},
non-negative matrix factorization \citep{zito2024compressivebayesiannonnegativematrix}, and variable selection with global-local shrinkage \citep{Bhattacharya_2015, Zhang_Bondell_2018}.


When specifying the parameters of a $\mathrm{Dirichlet}(\alpha u_1,\ldots,\alpha u_K)$ distribution, where $\sum_{i=1}^K u_i = 1$ and $\alpha > 0$, the standard approach is to set the mean $(u_1,\ldots,u_K)$ equal to the location where one would like the distribution to be centered and adjust the concentration parameter $\alpha$ to control the scale.
Unfortunately, if the mean is close to the boundary of the probability simplex---that is, if one or more entries is near zero---then this approach exhibits serious problems.
First, if the mean is near the boundary, then $\alpha$ provides little control over the scale of the distribution.  
Naturally, for large values of $\alpha$, the distribution is highly concentrated around the mean.
Intuitively, one might expect that using a smaller concentration  would lead to a less informative prior.
However, as $\alpha$ decreases, the distribution does not become much more spread out -- and, in fact, the bulk of the mass just moves even farther towards the boundary, becoming concentrated extremely close to it.
As a result, the mean ends up located so far in the tail that it is no longer a useful representation of the center of the distribution.
\cref{fig:problem} illustrates this behavior in the case of a Beta distribution, which is representative of the general Dirichlet case since the marginals of a Dirichlet are Beta-distributed.
Thus, the mean and concentration parameter are not useful for controlling the location and scale of Dirichlet distributions near the boundary.

\begin{figure}
    \centering
    \includegraphics[trim=0 0 0 0, clip, width=\textwidth]{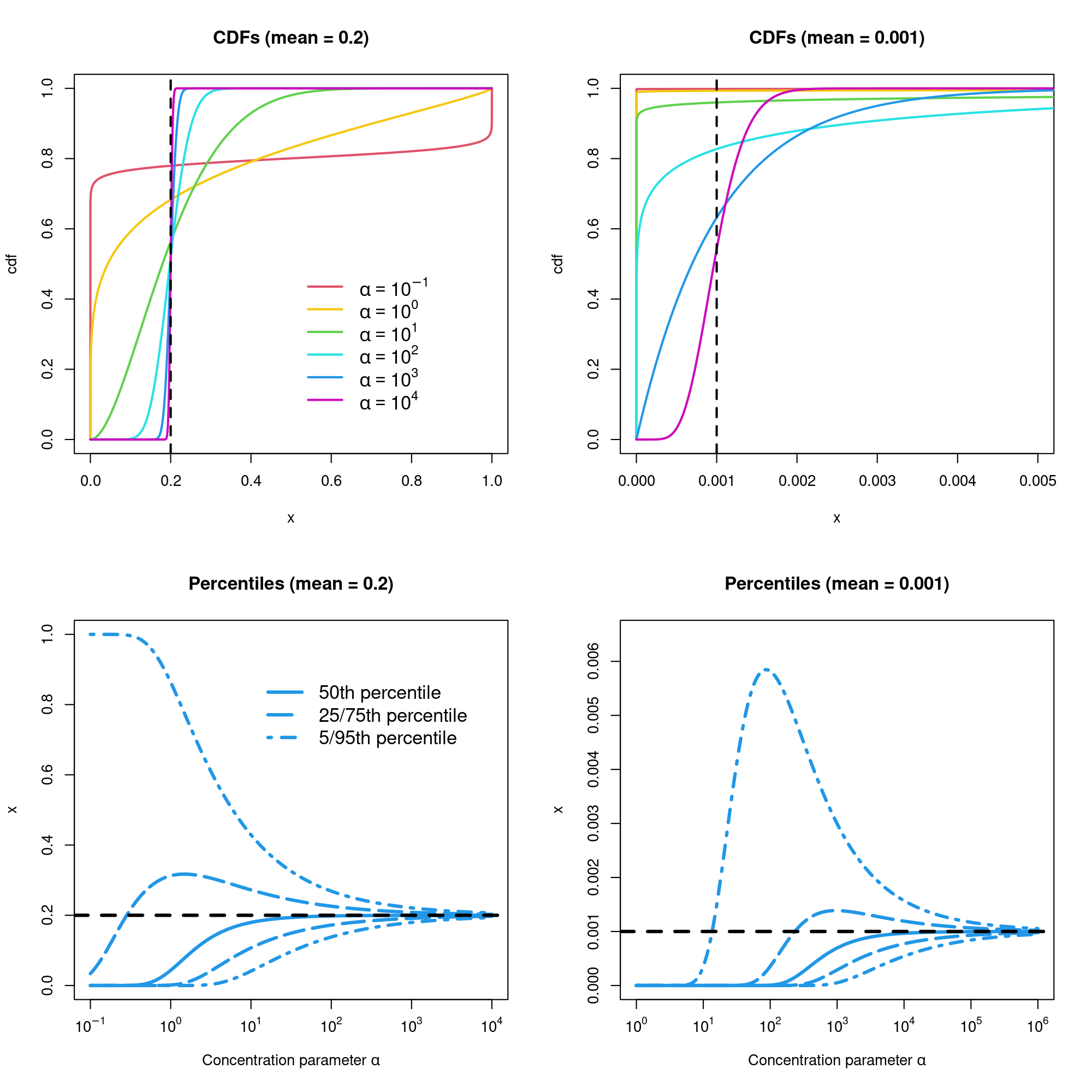}
    \caption{Setting the mean equal to the desired location is problematic near the boundary. (\textit{Top}) Cumulative distribution functions (CDFs) of Beta distributions with means $0.2$ and $0.001$, for a range of concentration parameters $\alpha$. The mean of $\mathrm{Beta}(a,b)$ is $a/(a+b)$ and the concentration parameter is $\alpha = a + b$. (\textit{Bottom}) Percentiles of the same distributions, as a function of $\alpha$.
   Even when the mean is not close to the boundary (such as $0.2$), most of the mass just shifts from the mean to the boundary as $\alpha$ goes from high to low.
    When the mean is close to the boundary (such as $0.001$), the distribution is quite concentrated for all $\alpha$, and $\alpha$ provides very little control over the scale.}
    \label{fig:problem}
\end{figure}

This pathological behavior may have severe consequences.  Some Bayesian models using Beta or Dirichlet priors may unintentionally be forcing probabilities to be essentially zero, even when the prior mean is not that close to zero.
Metropolis--Hastings proposals using Beta or Dirichlet distributions may lead to very poor mixing because the proposals are extremely close to the boundary with high probability, rather than being near the current state.
Furthermore, the problem is exacerbated in high dimensions: In a high-dimensional probability simplex, every point is near the boundary because the sum-to-one constraint forces many entries to be close to zero.

In this paper, we propose a novel method for specifying the parameters of a Dirichlet distribution in a way that provides better control over the location and scale.
Specifically, given a target location $c = (c_1,\ldots,c_K)$ and a scale parameter $s$, we maximize the density at $c$ subject to the constraint that a specified measure of scale is equal to $s$.
For instance, the choice of scale may be the concentration parameter, the sum of the variances, the mean cosine error, or some other value quantifying the spread of the distribution.
This \textit{maximum density} approach has several attractive features.
First, it provides greater control over the scale of the distribution.
Additionally, it tends to put more probability mass near the target location $c$, because it maximizes the density at $c$ by construction.
Furthermore, it can be computed using a fast and simple algorithm for performing the constrained optimization. 

The rest of the article is organized as follows.
In \cref{sec:method}, we describe our proposed methodology.
In \cref{sec:examples}, we illustrate the method's utility for (i) Metropolis--Hastings proposals with improved mixing near the boundary, (ii) Bayesian inference for the probability of rare events,
and (iii) generating random probability vectors for mutational signatures analysis in cancer genomics.
We conclude with a brief discussion in \cref{sec:discussion}.



\section{Methodology}
\label{sec:method}

In this section, we introduce the maximum density method.
We first consider the special case of Beta distributions (\cref{sec:method-beta}), then generalize to Dirichlet distributions (\cref{sec:method-dirichlet}), and provide a step-by-step algorithm (\cref{sec:method-algorithm}).

\subsection{Maximum density method for specifying Beta distributions}\label{sec:method-beta}

The Beta distribution with parameters $a>0$ and $b>0$, denoted $\mathrm{Beta}(a,b)$, has density $\mathrm{Beta}(x\mid a,b) = x^{a-1} (1-x)^{b-1} / \mathrm{B}(a,b)$ for $x\in(0,1)$, where $\mathrm{B}(a,b)$ is the beta function.
Given a target location $c\in(0,1)$, we propose to set $a$ and $b$ via the optimization:
\begin{align}\label{eq:max_density}
    \argmax_{a,b>0} \mathrm{Beta}(c\mid a,b) \text{ subject to } h(a,b) = 0,
\end{align}
where $h(a,b) = 0$ represents a desired scale constraint.
In the Beta case, we focus particularly on constraining the variance to a given value $v$ by defining $h(a,b) = V(a,b)/v - 1$, where $V(a,b) = a b / \big((a+b)^2 (a+b+1)\big)$ is the variance of $\mathrm{Beta}(a,b)$.
In other words, out of all the Beta distributions with variance $v$, we choose the one with highest density at the target location $c$.
We refer to this as the \textit{maximum density} method.

For $v\in(0,1/4)$, the solution to this optimization always exists and is finite, as we show in \cref{thm:optimization}.
To compute the solution, we develop an algorithm based on Newton's method with equality constraints; see \cref{sec:method-algorithm}.
This algorithm reliably converges to the constrained maximum for any $c\in(0,1)$, $v\in(0,1/4)$.
The convergence is rapid for $v<0.2$, and becomes slower as $v$ approaches $1/4$.

\begin{theorem}
    \label{thm:optimization}
    For all $c\in(0,1)$ and $v\in(0,1/4)$, there exists a finite solution to:
    $$ \argmax_{a,b>0} \mathrm{Beta}(c\mid a,b) \text{ subject to } V(a,b) = v, $$
    where $V(a,b) = a b / \big((a+b)^2 (a+b+1)\big)$.
\end{theorem}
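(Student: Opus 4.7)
The plan is to establish existence by a compactness argument: write $\mathcal{C} = \{(a,b)\in(0,\infty)^2 : V(a,b)=v\}$, show that $f(a,b) := \mathrm{Beta}(c\mid a,b)$ extends continuously to zero on the topological boundary of $\mathcal{C}$ (taken in $[0,\infty)^2$), and conclude that a finite maximizer exists because the extended function is continuous on a compact set and strictly positive on $\mathcal{C}$.

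First, I would check that $\mathcal{C}$ is nonempty. Along the diagonal $a=b$, one has $V(a,a)=1/(4(2a+1))$, which is a continuous bijection from $(0,\infty)$ onto $(0,1/4)$, so every $v\in(0,1/4)$ is attained. Next, I would show that $\mathcal{C}$ is bounded. By AM--GM, $ab\le (a+b)^2/4$, so
\[
v = V(a,b) \;\le\; \frac{1}{4(a+b+1)},
\]
which gives $a+b\le 1/(4v)-1$. Hence $\mathcal{C}$ is contained in a compact subset of $[0,\infty)^2$, and in particular $a$ and $b$ are individually bounded on $\mathcal{C}$.

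The key step is the boundary analysis. Because $a+b$ is bounded, any divergent sequence in $\mathcal{C}$ must approach a point with $a=0$ or $b=0$. If $a_n\to 0$ while $b_n$ stays bounded away from zero, then $V(a_n,b_n)\to 0\neq v$, contradicting membership in $\mathcal{C}$; hence $a_n\to 0$ forces $b_n\to 0$, and symmetrically. So every boundary point of $\mathcal{C}$ has both coordinates zero. Along such sequences, I would use the standard asymptotics $\Gamma(x)\sim 1/x$ as $x\to 0^+$ to get
\[
\frac{1}{\mathrm{B}(a,b)} = \frac{\Gamma(a+b)}{\Gamma(a)\Gamma(b)} \;\sim\; \frac{ab}{a+b} \;\longrightarrow\; 0,
\]
while $c^{a-1}(1-c)^{b-1}\to c^{-1}(1-c)^{-1}$ stays finite. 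Thus $f(a,b)\to 0$ on the boundary.

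It then follows that $f$ extends continuously to $\overline{\mathcal{C}}\subset[0,\infty)^2$ by declaring $f=0$ on the added boundary points, and $\overline{\mathcal{C}}$ is compact. The extension attains its maximum, and since $f(a,b)>0$ at any interior point of $\mathcal{C}$ (for instance on the diagonal point found above), the maximum is strictly positive and hence attained in $\mathcal{C}$ itself, yielding a finite optimizer $(a^*,b^*)$ with both coordinates positive. The only mildly delicate point is the boundary vanishing: one must rule out the possibility that $a\to 0$ with $b$ staying positive, which is where the constraint $V(a,b)=v>0$ does the essential work. Everything else is continuity plus a classical Gamma-function expansion.
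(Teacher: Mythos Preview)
Your argument is correct, and the overall strategy---show the constraint set is nonempty and bounded, then use compactness after controlling the boundary behavior of the density---matches the paper's. The technical details differ, however. For nonemptiness you parametrize the diagonal explicitly, while the paper invokes its \cref{thm:beta-existence}; for boundedness you use AM--GM to bound $a+b$, while the paper bounds $\max\{a,b\}$ directly via $\min\{1/a,1/b\}>1/(a+b+1)\ge v$. The substantive difference is the boundary step. The paper shows, independently of the constraint, that $\mathrm{Beta}(c\mid a,b)\to 0$ whenever $a\to 0$ or $b\to 0$, using the inequality $\mathrm{B}(a,b)\ge 1/a+1/b-1$ for $a,b\in(0,1]$ due to Iv\'ady; it then restricts to a compact rectangle $[\varepsilon,\infty)^2\cap S$. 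You instead exploit the constraint to show that the only limit point of $\mathcal{C}$ outside $(0,\infty)^2$ is $(0,0)$, and handle that single point with the elementary asymptotic $\Gamma(x)\sim 1/x$. Your route avoids the less well-known Beta-function inequality and is arguably more self-contained; the paper's route is more modular, since its boundary estimate is decoupled from the particular constraint and would transfer unchanged to, e.g., the fixed-concentration version in \cref{thm:optimization-alpha}.
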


All proofs are provided in \cref{sec:proofs} of the Supplementary Material.  In \cref{thm:optimization}, the restriction to $v\in(0,1/4)$ is necessary since there do not exist Beta distributions with variance greater or equal to $1/4$, as we show in \cref{thm:beta-existence}.

\begin{theorem}
\label{thm:beta-existence}
There is a Beta distribution with mean $u$ and variance $v$ if and only if $0 < v < 1/4$ and $|u - 1/2| < (1/2)\sqrt{1 - 4 v}$.
\end{theorem}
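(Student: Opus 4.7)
The plan is to reparametrize the Beta family by its mean $u$ and concentration $\alpha = a + b$, and then translate the existence of valid parameters into a simple algebraic condition on $(u,v)$.

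First, I would observe that $(a,b) \mapsto (a/(a+b),\, a+b)$ is a bijection from $(0,\infty)^2$ to $(0,1) \times (0,\infty)$, with inverse $a = \alpha u$, $b = \alpha (1-u)$. Substituting into the variance formula $V(a,b) = ab/((a+b)^2 (a+b+1))$ gives
\begin{equation*}
V(\alpha u,\, \alpha(1-u)) \;=\; \frac{u(1-u)}{\alpha + 1}.
\end{equation*}
Hence the pair $(u,v)$ arises as the (mean, variance) of some $\mathrm{Beta}(a,b)$ if and only if there exists $\alpha > 0$ with $v = u(1-u)/(\alpha+1)$. Solving for $\alpha$ gives $\alpha = u(1-u)/v - 1$, which is positive precisely when $0 < v < u(1-u)$. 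Thus the existence condition is
\begin{equation*}
u \in (0,1) \quad \text{and} \quad 0 < v < u(1-u).
\end{equation*}

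Second, I would show this condition is equivalent to the one in the statement. The inequality $v < u(1-u)$ is equivalent to $u^2 - u + v < 0$. By the quadratic formula, this holds for some real $u$ only when the discriminant $1 - 4v$ is strictly positive, i.e., $v < 1/4$; and in that case the solution set is exactly the open interval with endpoints $(1 \mp \sqrt{1-4v})/2$, which is precisely $|u - 1/2| < (1/2)\sqrt{1-4v}$. Note that this interval automatically lies in $(0,1)$ since $\sqrt{1-4v} < 1$ for $v > 0$, so the constraint $u \in (0,1)$ is redundant. Combining the two directions gives the stated equivalence.

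There is no real obstacle here: the argument is a straightforward reparametrization followed by a quadratic inequality. The only thing to be careful about is verifying that the interval $\bigl((1-\sqrt{1-4v})/2,\, (1+\sqrt{1-4v})/2\bigr)$ is contained in $(0,1)$ so that the condition $u \in (0,1)$ does not have to be carried along, and that the construction $\alpha = u(1-u)/v - 1$, $a = \alpha u$, $b = \alpha(1-u)$ yields valid Beta parameters (both positive) in the forward direction.
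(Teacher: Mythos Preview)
Your proposal is correct and follows essentially the same approach as the paper: reparametrize by mean $u$ and concentration $\alpha=a+b$, derive $v=u(1-u)/(\alpha+1)$, solve $\alpha=u(1-u)/v-1>0$, and use the quadratic formula to translate $v<u(1-u)$ into the interval condition on $u$. Your observation that the interval automatically lies in $(0,1)$ is a nice touch that the paper leaves implicit.
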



\subsubsection{Fixing the concentration parameter instead of the variance}

In some situations, it may be preferable to constrain the concentration $\alpha = a+b$ instead of the variance.  In this case, the maximum density method is to set $a$ and $b$ via:
\begin{align}\label{eq:max_density_alpha}
    \argmax_{a,b>0} \mathrm{Beta}(c\mid a,b) \text{ subject to } a+b = \alpha,
\end{align}
given $c\in(0,1)$ and $\alpha > 0$. 
\cref{thm:optimization} extends to this case as well, that is, for all $c\in(0,1)$ and $\alpha\in(0,\infty)$ there is a finite solution to \cref{eq:max_density_alpha}; see \cref{thm:optimization-alpha}.
A slightly different version of the Newton's method algorithm can be used to solve this; see \cref{sec:method-algorithm}. In \cref{fig:method,fig:logit}, we use this fixed $\alpha$ version (rather than fixed variance) to facilitate comparisons with the mean method with fixed $\alpha$.


\begin{figure}
    \centering
    \includegraphics[trim=0 0 0 0, clip, width=\textwidth]{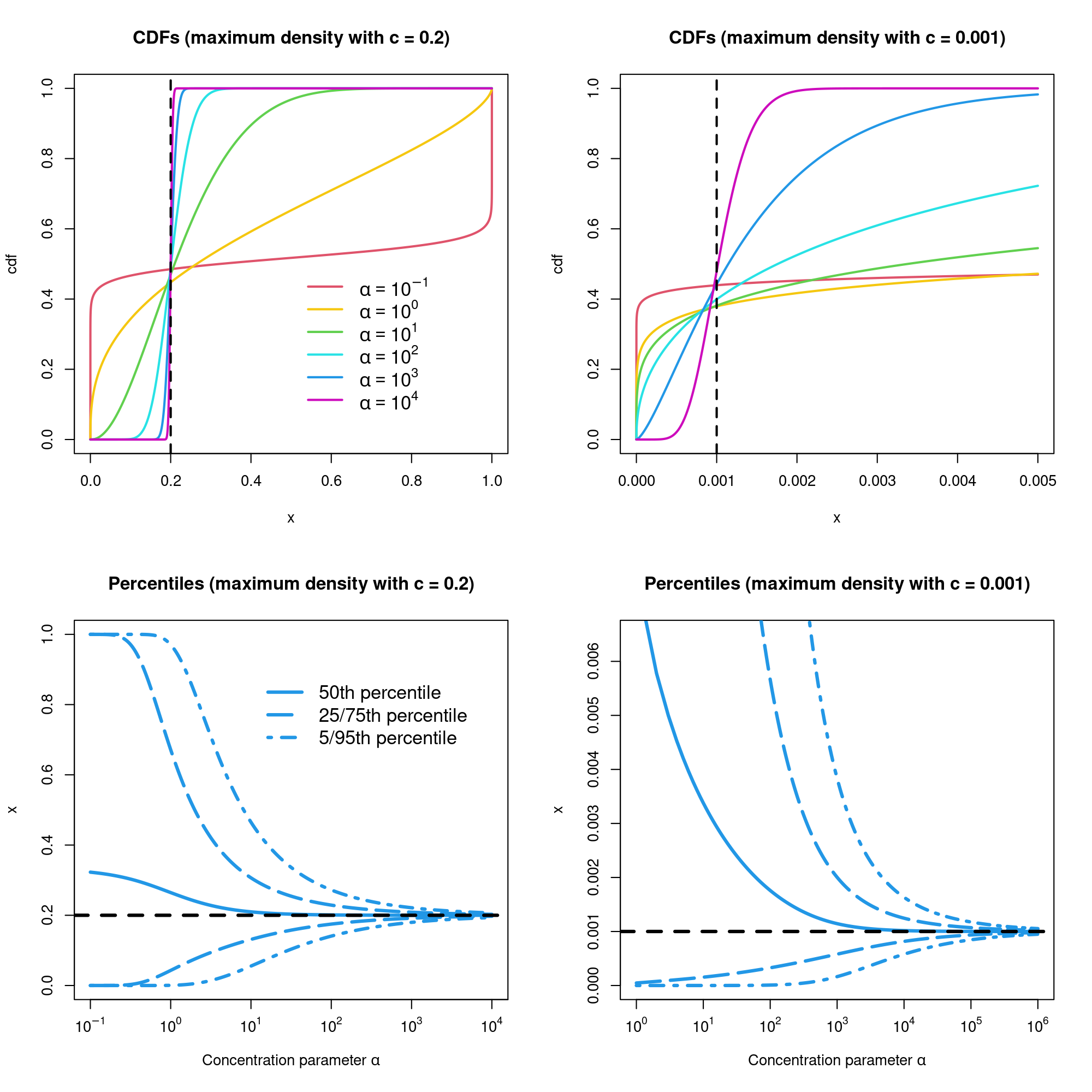}
    \caption{Maximum density provides better control of location and scale near the boundary. (\textit{Top}) CDFs of Beta distributions determined by the maximum density method with target locations $c = 0.2$ and $c = 0.001$, for the same range of concentration parameters $\alpha$ as in \cref{fig:problem}; here, we maximize $\mathrm{Beta}(c \mid a,b)$ subject to $a + b = \alpha$ to enable comparison with \cref{fig:problem}, see \cref{sec:method-algorithm} for details.
    (\textit{Bottom}) Percentiles of the same distributions, as a function of $\alpha$.
    A wide range of scales can be attained by varying $\alpha$ (or correspondingly, varying $v$) when using the maximum density method, regardless of whether the target location $c$ is near the boundary.
    Furthermore, $c$ remains between the first quartile and the median of the distribution as $\alpha$ decreases, rather than moving into the upper tail as in \cref{fig:problem}.}
    \label{fig:method}
\end{figure}

\subsubsection{Comparison of the distributions attained by each method}

The maximum density method resolves the issues with the mean method seen in \cref{fig:problem}.
As shown in \cref{fig:method}, the maximum density method exhibits a greater range of control over the scale of the distribution, while keeping the distribution more centered at the target location in terms of percentiles.

\begin{figure}
    \centering
    \includegraphics[trim=0.5cm 0 1.25cm 0, clip, width=0.49\textwidth]{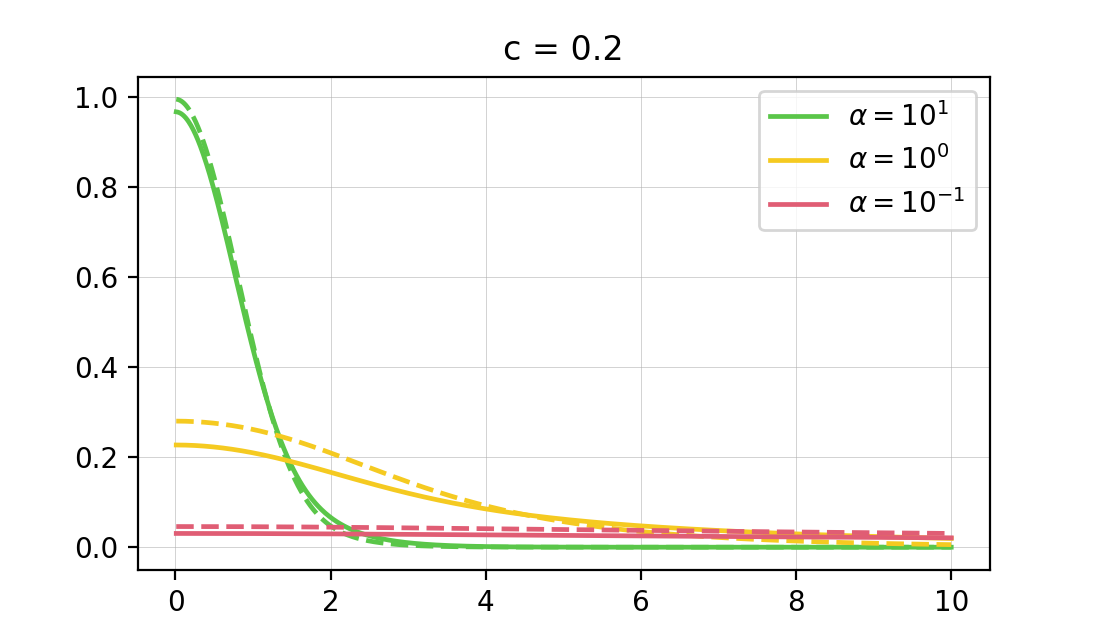}
    \includegraphics[trim=0.5cm 0 1.25cm 0, clip, width=0.49\textwidth]{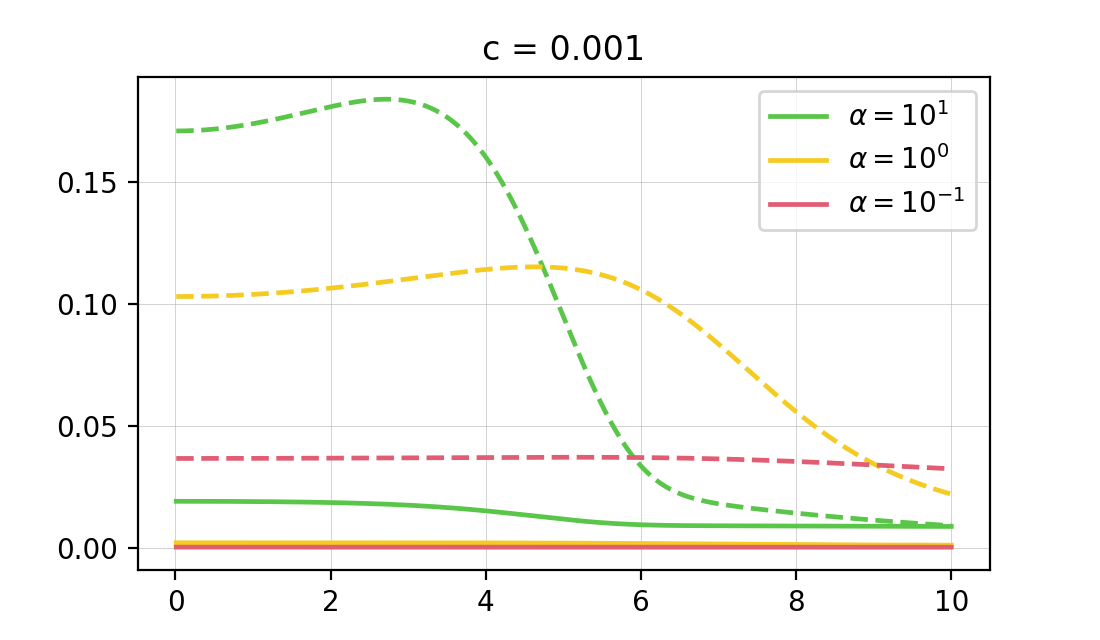}
    \caption{For Beta distributions, the maximum density method puts more probability mass near the target location, compared to the mean method. The plots show the density of $Y = |\mathrm{logit}(X) - \mathrm{logit}(c)|$ when $X$ is Beta distributed with parameters chosen by either (i) setting the mean equal to $c$ (solid line), or (ii) using the maximum density method with target location $c$ (dashed line). For both methods, we use concentration parameters of $\alpha = 10$, $\alpha = 1$, and $\alpha = 0.1$.}
    \label{fig:logit}
\end{figure}

To quantify how far the mass of the distribution is from the target location $c$, we consider 
$Y = |\mathrm{logit}(X) - \mathrm{logit}(c)|$, where
$\mathrm{logit}(x) = \log(x/(1-x))$ and $X\sim\mathrm{Beta}(a,b)$.
Small values of $Y$ mean that $X$ is close to $c$ on the logit scale.
Transforming to the logit scale makes it possible to evaluate differences in magnitude close to the boundary.
\cref{fig:logit} shows the density of $Y$ when using (i) the mean method and (ii) our maximum density method; for both methods we compare results when using $c\in\{0.001,0.2\}$ and $\alpha\in\{0.1,1,10\}$.
More precisely, the mean method chooses $a = \alpha c$ and $b = \alpha (1-c)$.  For the maximum density method, we use target location $c$ and concentration $\alpha$.  We derive a closed-form expression for the density of $Y$ in \cref{sec:logit-derivation}.

\cref{fig:logit} demonstrates that the maximum density method puts more probability mass near the target location $c$, compared to the mean method.
Specifically, we see that the distribution of $Y$ has more probability mass near $0$ under the maximum density method, meaning that $X$ is closer to $c$ with high probability. The difference is especially stark near the boundary, for instance when $c = 0.001$.

\subsubsection{Median method: An alternative approach}

An alternative to our maximum density approach would be to choose a Beta distribution with median equal to the target location $c$ and variance equal to $v$ (or concentration parameter equal to $\alpha$).
In the case of Beta distributions, this works reasonably well.
However, the maximum density approach has three advantages: 
(1) it tends to put more mass near the target location $c$,
(2) optimization is more tractable and stable,
and (3) it extends more naturally to the general Dirichlet case.
See \cref{sec:median} for more details on this median-based approach.

\subsection{Maximum density method for specifying Dirichlet distributions}
\label{sec:method-dirichlet}

The Dirichlet distribution with parameters $a_1,\ldots,a_K > 0$ has density 
\begin{equation}\label{eq:dir_density}
    \mathrm{Dirichlet}(x \mid a_1,\ldots,a_K) = \frac{\Gamma\big(\sum_{i=1}^K a_i\big)}{\prod_{i=1}^K \Gamma(a_i)}  \, x_1^{a_1-1}\cdots x_K^{a_K-1}
\end{equation}
for $x = (x_1,\ldots,x_K) \in \Delta_K$, 
where $\Delta_K := \big\{x\in\mathbb{R}^K : x_1,\ldots,x_K>0, \, \sum_{i=1}^K x_i=1\big\}$ is the probability simplex.
Here, $\Gamma(x)$ denotes the gamma function.

Letting $X\sim\mathrm{Dirichlet}(a_1,\ldots,a_K)$, it can be shown that the $i$th coordinate $X_i$ follows a Beta distribution, specifically, $X_i \sim \mathrm{Beta}(a_i, \, \sum_{j\neq i} a_j)$.
Thus, the Beta distribution can essentially be thought of as the special case of a Dirichlet with $K = 2$, although technically the Beta corresponds to one coordinate of a Dirichlet.
The mean of the $i$th coordinate is $\E(X_i) = a_i/\sum_{j=1}^K a_j$ and the scale of the Dirichlet distribution around its mean is traditionally thought to be controlled by the \textit{concentration parameter}, $\alpha = \sum_{i=1}^K a_i$. 
However, when the mean of $X_i$ is near zero or one, the distribution of $X_i$ exhibits the same pathologies as in the case of a Beta distribution.
This follows simply because $X_i$ is, in fact, Beta distributed.
In particular, the concentration parameter $\alpha$ exhibits little control over the scale, and the distribution of $X_i$ concentrates near the boundary as $\alpha$ decreases.

We extend our maximum density method to the general case of a $K$-dimensional Dirichlet distribution as follows.
Suppose $c = (c_1,\ldots,c_K)\in\Delta_K$ is a target location in the probability simplex.
We propose to choose the Dirichlet parameters $a_1,\ldots,a_K$ via:
\begin{align}\label{eq:max_density_dirichlet}
    \argmax_{a_1,\ldots,a_K>0} \mathrm{Dirichlet}(c\mid a_1,\ldots,a_K) \text{ subject to } h(a_1,\ldots,a_K) = 0,
\end{align}
where $h$ represents a desired constraint.  This is a direct generalization of \cref{eq:max_density}.
We provide an algorithm for solving this optimization problem in \cref{alg:dirichlet}.
One simple choice of constraint is to fix the concentration parameter to a given value $\alpha$; we do this by defining $h(a_1,\ldots,a_K) = (\sum_{i=1}^K a_i)/\alpha - 1$.
Another option would be to control the variances, however, in the multivariate case we have to summarize the scale of the distribution with a single number.
Motivated by an application to mutational signatures, we consider controlling the mean cosine error between $X$ and $\mathbb{E}(X)$; see Sections~\ref{sec:mutational-signatures} and \ref{sec:derivation-of-algorithm} for details.

\cref{fig:logit-dirichlet} shows that for Dirichlet distributions, the maximum density method puts more mass near $c$ compared to setting the mean equal to $c$; compare with \cref{fig:logit} in the Beta case.  Here, we constrain $\alpha$ to enable direct comparison with the mean method.


\begin{figure}
    \centering
    \includegraphics[trim=0 0 0.5cm 0, clip, width=0.49\textwidth]{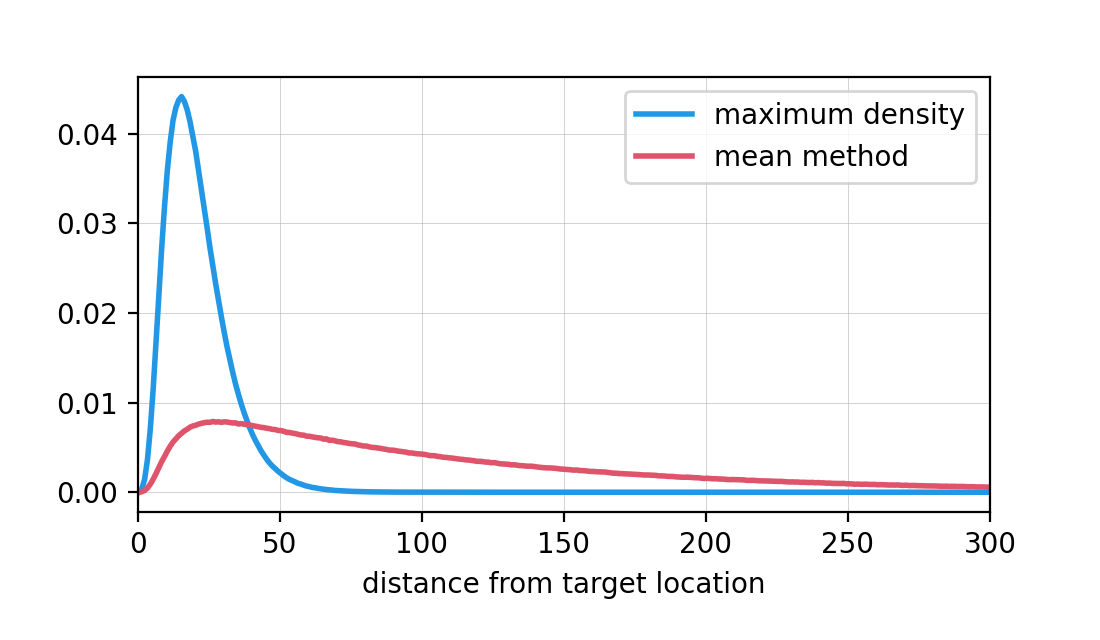}
    \includegraphics[trim=0 0 0.5cm 0, clip, width=0.49\textwidth]{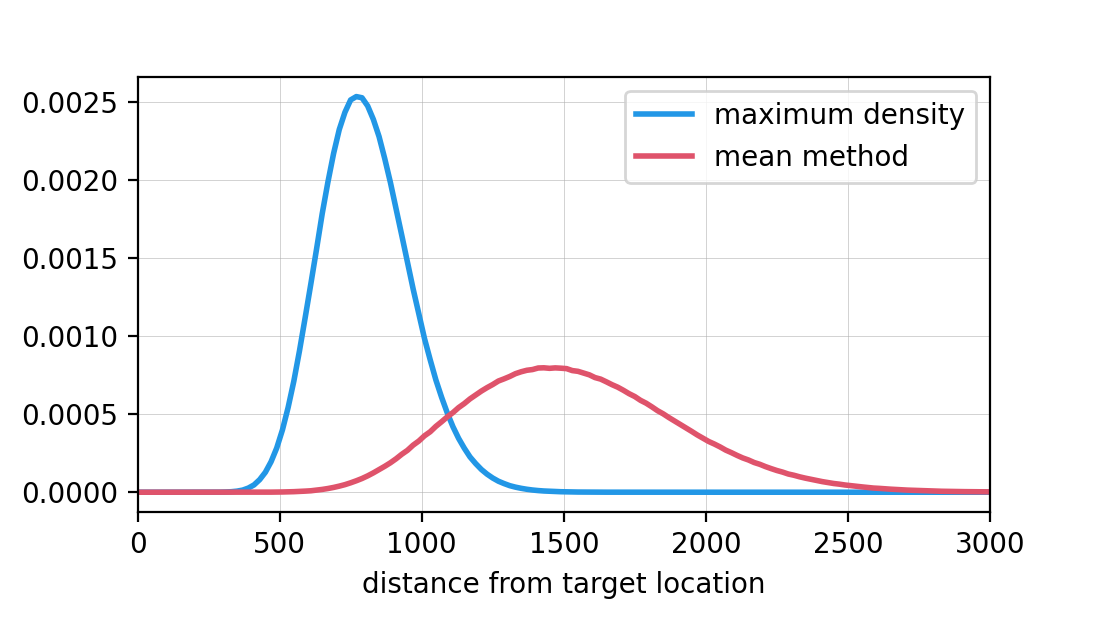}
    \caption{For Dirichlet distributions, the maximum density method puts more probability mass closer the target location. The plots show the density of $Y = \sum_{i=1}^K |\mathrm{logit}(X_i) - \mathrm{logit}(c_i)|$ when $X\sim\mathrm{Dirichlet}(a_1,\ldots,a_K)$ with $a_1,\ldots,a_K$ chosen by either (i) setting the mean equal to $c$ (red line), or (ii) using the maximum density method with target location $c$ (blue line). The concentration parameter is $\alpha = 1$ for both methods. (\textit{Left}) $c = (0.01, 0.1, 0.2, 0.3, 0.39)$; (\textit{Right}) $c = (1, 2, 3, \ldots, 30) / \sum_{i=1}^{30} i$.}
    \label{fig:logit-dirichlet}
\end{figure}

\subsection{Optimization algorithm for the maximum density method}
\label{sec:method-algorithm}

To solve the constrained optimization problem for the maximum density method (\cref{eq:max_density,eq:max_density_dirichlet}), we use an algorithm based on Newton's method with equality constraints \citep{boyd2004convex}.
This generalization of Newton's method is a second-order optimization technique for problems of the form 
\begin{align*}
    \argmin_{x} f(x) \text{ subject to } h(x) = 0,
\end{align*}
where $f$ is a strictly convex, twice-differentiable function and $h$ is differentiable. 
We provide versions of this algorithm for implementing the maximum density method with various choices of constraint function; see \cref{sec:derivation-of-algorithm} for the derivation.

\begin{algorithm}[t]
\caption{~~Maximum density method for Dirichlet distributions}\label{alg:dirichlet}
\begin{algorithmic}[1]
\Require{target location $c\in\Delta_K$, step size $\rho\in(0,1]$, maximum number of iterations $\texttt{maxiter}>0$, and convergence tolerance $\texttt{tol}>0$.}
\State $a_i \gets 10(c_i + 1)/2$ for $i\in\{1,\ldots,K\}$ \Comment{Initialization}
\For{$\texttt{iter} = 1,\ldots,\texttt{maxiter}$}
    \State $s \gets a_1 + \cdots + a_K$ 
    \State $g_i \gets \psi(a_i) - \psi(s) - \log(c_i)$ for $i\in\{1,\ldots,K\}$ \Comment{Compute gradient}
    \State $H_{i j} \gets \psi'(a_i)\mathds{1}(i=j) - \psi'(s)$ for $i,j\in\{1,\ldots,K\}$\Comment{Compute Hessian matrix}
    \State $h \gets h(a)$ \Comment{Value of constraint}
    \State $J_i \gets \partial h/\partial a_i$ for $i\in\{1,\ldots,K\}$ \Comment{Jacobian of constraint}
    \State Solve for $\delta\in\mathbb{R}^K$ and $\lambda\in\mathbb{R}$ in the linear system: \Comment{Compute Newton step}
        \begin{align*}
        \begin{bmatrix}
            H & J^\mathtt{T} \\
            J & 0
        \end{bmatrix}
        \begin{bmatrix}
            \delta \\ \lambda
        \end{bmatrix}
        =
        \begin{bmatrix}
            -g \\ -h
        \end{bmatrix}
        \end{align*}
    ~~~~where $H = [H_{i j}]\in\mathbb{R}^{K\times K}$, $J = [J_1,\ldots,J_K] \in\mathbb{R}^{1\times K}$, and $g = [g_i]\in\mathbb{R}^{K\times 1}$.
    \State $a' \gets a$ \Comment{Store current values}
    \State $a \gets a' + \rho\,\delta$ \Comment{Update values}
    \State for $i\in\{1,\ldots,K\}$, if $a_i \leq 0$ then $a_i \gets a_i'/2$  \Comment{Enforce boundary constraints}
    \If{$|h| + \sum_{i=1}^K |a_i/a_i' - 1| < \texttt{tol}$} \Comment{Check for convergence}
    \State \textbf{output} $a$ \Comment{Return output}
    \EndIf
\EndFor
\Ensure{Dirichlet parameters $a = (a_1,\ldots,a_K)$.}
\end{algorithmic}
\end{algorithm}

\cref{alg:dirichlet} provides a step-by-step procedure for the general case of Dirichlet distributions; see \cref{alg:beta} for the special case of Beta distributions.
As default settings, we use $\rho = 1/2$, $\texttt{maxiter} = 100$, and $\texttt{tol} = 10^{-8}$.
If \cref{alg:dirichlet} reaches $\texttt{maxiter}$ iterations without converging, then we set $\rho \gets \rho/5$ and $\texttt{maxiter} \gets 5\,\texttt{maxiter}$, and run the algorithm again; if it still fails to converge after 5 such restarts, we stop.
In \cref{alg:dirichlet}, $\psi(x)$ and $\psi'(x)$ are the digamma and trigamma functions, that is, the first and second derivatives of $\log\Gamma(x)$.
Also, $\mathds{1}(\cdot)$ is the indicator function, that is, $\mathds{1}(E) = 1$ if $E$ is true, and $\mathds{1}(E) = 0$ otherwise. 

In lines 6--7 of \cref{alg:dirichlet}, $h(a) = h(a_1,\ldots,a_K)$ is the chosen constraint function in \cref{eq:max_density_dirichlet} and $\partial h/\partial a_i$ is its partial derivative with respect to $a_i$.  
For example, to constrain the concentration parameter to equal $\alpha$, line 6 becomes 
$h \gets s/\alpha - 1$
and line 7 becomes $J_i \gets 1/\alpha$ for $i\in\{1,\ldots,K\}$.
See \cref{sec:derivation-of-algorithm} for more details and formulas for handling other constraints.

\section{Examples}
\label{sec:examples}

\subsection{Metropolis--Hastings proposals for probabilities}
\label{sec:mh-example}

Markov chain Monte Carlo (MCMC) is commonly used for posterior inference in Bayesian models.
For models containing a latent vector of probabilities, say $x = (x_1,\ldots,x_K)$, it is necessary to construct MCMC moves on the probability simplex. Except in cases where the prior on $x$ is Dirichlet and the corresponding likelihood is multinomial, the full conditional distribution of $x$ will rarely have a closed form that can be sampled from.
In such cases, a typical approach is to use the Metropolis--Hastings (MH) algorithm to perform an MCMC move that preserves the full conditional distribution of $x$.
A natural choice of MH proposal distribution on the probability simplex is a Dirichlet with mean equal to the current state of $x$.

However, when the target distribution of $x$ (for instance, the full conditional) places non-negligible mass near the boundary of the simplex, this is highly suboptimal due to the issues illustrated in \cref{fig:problem}.
Specifically, when the current state of $x$ has one or more entries $x_i$ that are near $0$, a Dirichlet proposal with mean $x$ will be concentrated either (i) very near $x$ itself or (ii) extremely close to the boundary.  In case (i) any moves will be very small, and in case (ii) there will be a very small probability of moving closer to the center of the simplex relative to $x$.
As a result, the MCMC sampler will have difficulty moving around the space efficiently.

The maximum density method can be used to construct MH proposal distributions that yield better MCMC mixing.
Specifically, we propose to set the target location $c$ to be the current state of $x$ in the MCMC sampler, set the scale $s$ to a preselected value (for instance, based on pilot runs of the sampler to tune $s$ for good performance), use \cref{alg:dirichlet} to choose the Dirichlet parameters $a_1,\ldots,a_K$, and then use $\mathrm{Dirichlet}(a_1,\ldots,a_K)$ as the MH proposal distribution.  The numerator of the MH acceptance ratio can be computed by using the same procedure to define the proposal distribution at the proposed value; see \cref{sec:mh-details} for details.  This provides better control over the location and scale of the proposals, improving MCMC performance.

To illustrate, we compare several methods for constructing Beta-distributed MH proposals on the unit interval $(0,1)$. Consider the following four target distributions:
\begin{enumerate}[(A)]
	\item \textit{Uniform:} $\mathrm{Beta}(1,1)$
	\item \textit{Unimodal at zero:} $\mathrm{Beta}(1,1000)$
	\item \textit{Bimodal mixture:} $0.75 \, \mathrm{Beta}(2,5) + 0.25\, \mathrm{Beta}(10,2)$ 
	\item \textit{Bimodal at zero and one:} $\mathrm{Beta}(1/2, 1/2)$.
\end{enumerate}
We compare the performance of the following methods.  Letting $x$ denote the current state, consider using an MH proposal consisting of a Beta distribution with:
\begin{enumerate}[(I)]
	\item maximum density for target location $x$ and fixed variance $v = 0.1$,
	\item mean $x$ and fixed concentration parameter $\alpha = 5$,
	\item mean $x$ and fixed variance $v = 0.1$, or
	\item mean $x$ and standard deviation $\sigma = \min\{x, 1-x, \sqrt{0.1}\}$.
\end{enumerate}

The choices of $v = 0.1$ and $\alpha = 5$ are based on pilot runs with a range of values to determine choices that perform well; see \cref{fig:mcmc-tuning-alpha,fig:mcmc-tuning-max-density}.
Method IV is motivated by the observation that if $x$ is close to $0$, then choosing mean $x$ and $\sigma \approx x$ keeps the mass of the distribution relatively near $x$; likewise for $\sigma \approx 1-x$ when $x$ is close to $1$. Keeping $\sigma \leq \sqrt{0.1}$ avoids issues with non-existence of Beta distributions with large variance, as characterized by \cref{thm:beta-existence}.
We refer to method IV as using ``adaptive variance.''
Method III appears to be reasonable at first, but is fundamentally flawed since by \cref{thm:beta-existence} there does not always exist a Beta distribution with mean $x\in(0,1)$ and variance $v = 0.1$. To make Method III well defined, we reject any proposal to a value of $x$ for which a return is impossible due to non-existence of the proposal distribution.

For each target distribution, we run each MCMC sampler for $10{,}000$ iterations after a burn-in of $100$ iterations; see \cref{sec:mh-details} for a detailed description of the sampler.  This is repeated $100$ times for each combination of target distribution (A-D) and method (I-IV).
To evaluate performance, we consider (i) the autocorrelation function, to quantify mixing performance, and (ii) the Kolmogorov--Smirnov distance between the posterior samples and the target distribution, to verify convergence to the target.

\begin{figure}
    \centering
    \includegraphics[trim=0.5cm 0 0.25cm 0, clip, width=1\textwidth]{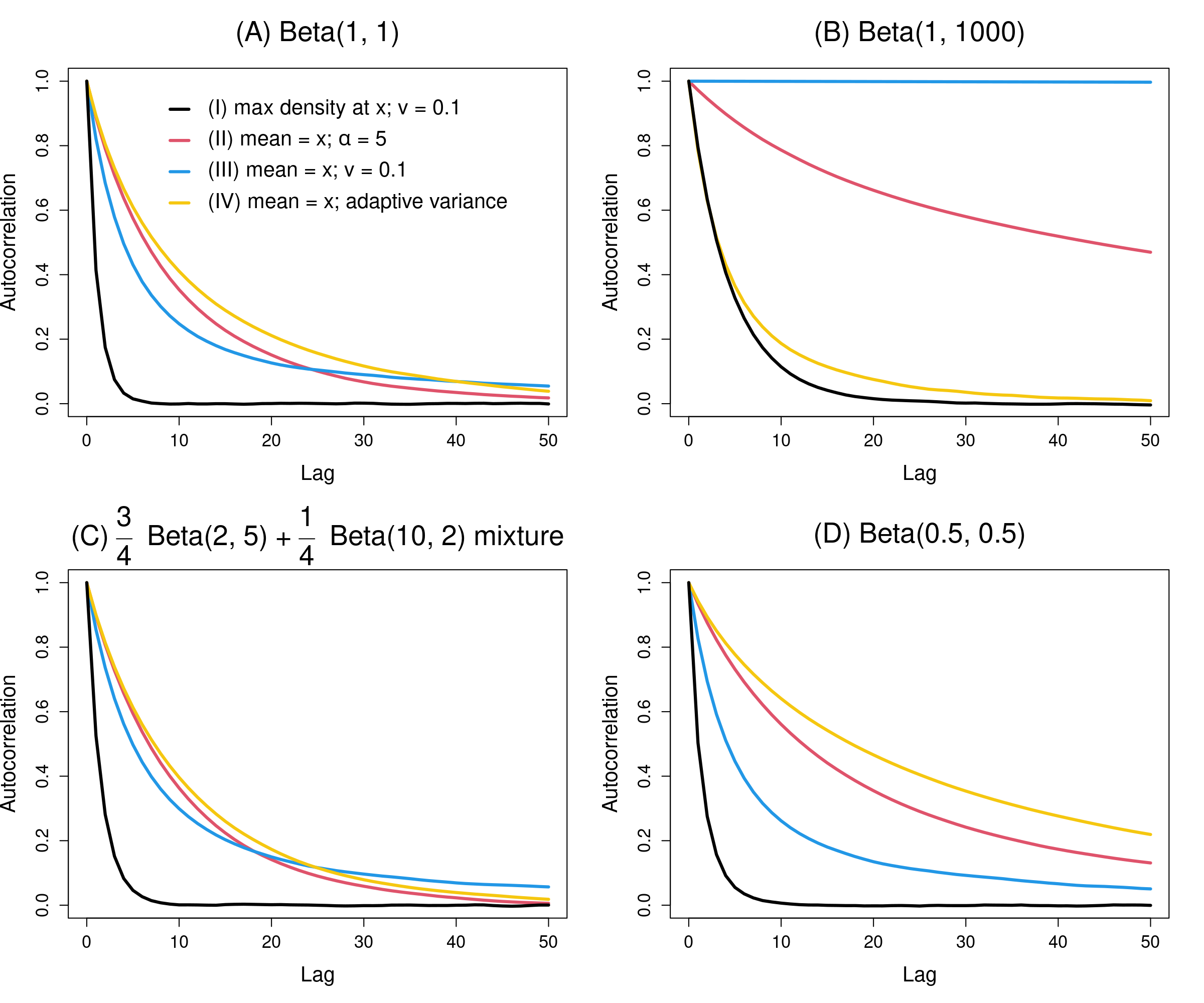}
    \caption{The maximum density method yields MH proposals with improved MCMC mixing compared to the mean method. Each plot shows the autocorrelation functions (ACFs) for one of the four target distributions (A-D). ACFs are shown for each method of proposal distribution specification (I-IV). Maximum density has lower autocorrelation than the mean methods, indicating that the sampler exhibits better performance.}
    \label{fig:autocorrelation}
\end{figure}

\cref{fig:autocorrelation} shows the estimated autocorrelation functions (ACFs) for each combination of target distribution (A-D) and method (I-IV). 
Compared to the ACFs for the mean-based methods, the ACF for our maximum density method decays significantly faster, indicating that the sampler is more efficiently traversing the target distribution.
In \cref{fig:autocorrelation}, method III (mean $x$, variance $0.1$) appears to perform reasonably well on target distributions A, C, and D, but this is misleading.
In fact, method III is not even converging to the target distribution, as we demonstrate next.

\begin{figure}
    \centering
    \includegraphics[trim=0 0 0.25cm 2.5in, clip, width=0.8\textwidth]{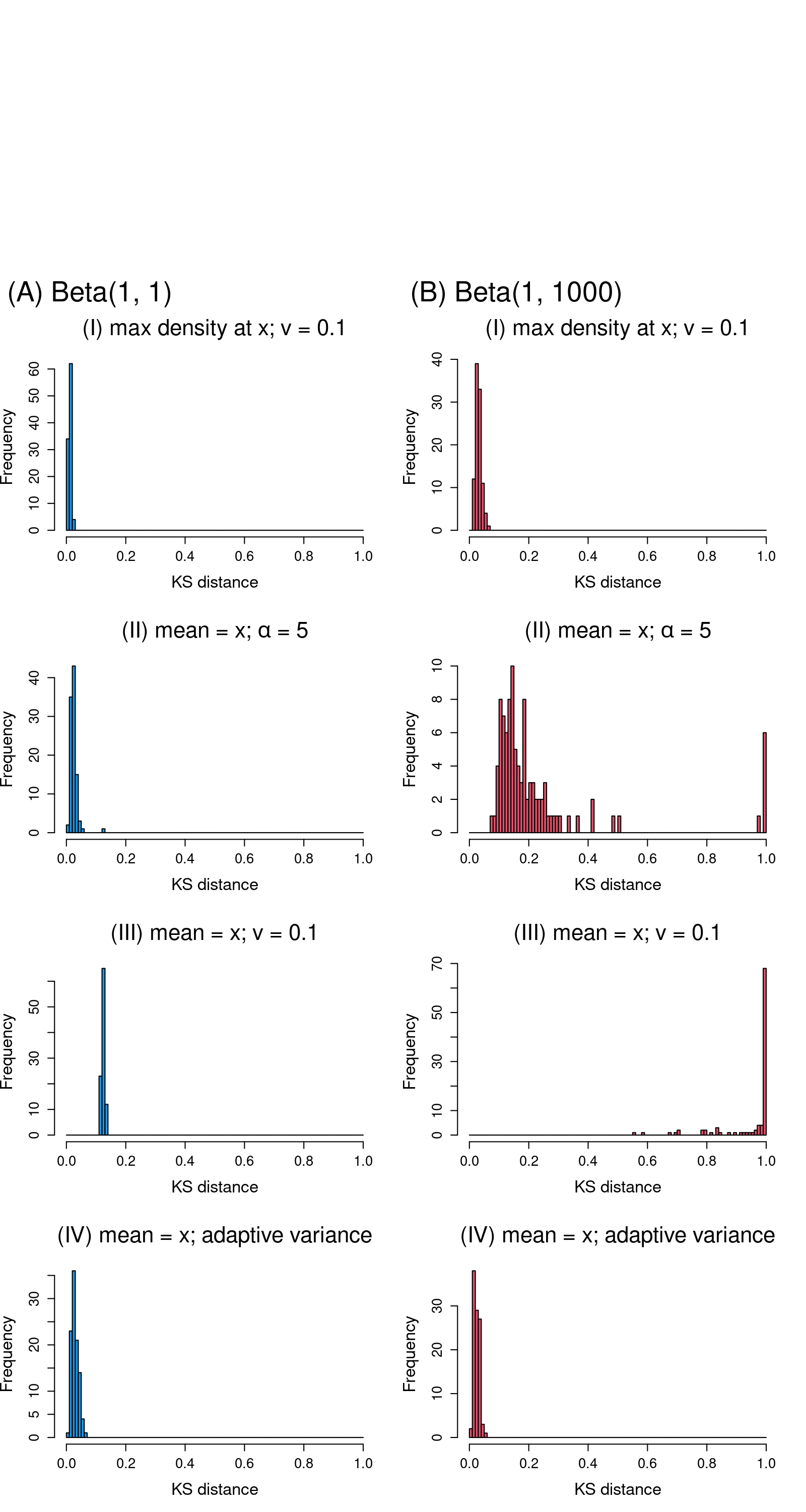}
    \caption{Histograms of the Kolmogorov--Smirnov distance between the target distribution and the MCMC approximation to it. 
    The blue histograms in the left column are for the $\mathrm{Beta}(1,1)$ target distribution (case A); the red histograms in the right column are for the $\mathrm{Beta}(1,1000)$ target distribution (case B).
    Samplers using maximum density proposals (method I) or the mean method with adaptive variance (method IV) converge reliably to the target distribution.  
    Method III fails even for simple target distributions.
    Method II fails when the target distribution puts significant mass near the boundary.
    }
    \label{fig:ks-distance}
\end{figure}

\cref{fig:ks-distance} shows the Kolmogorov--Smirnov (KS) distances between the target distribution and the MCMC approximation based on samplers.
For each combination of target distribution and method, the figure shows the distribution of KS distances over $100$ replicate runs of the MCMC sampler.
For the $\mathrm{Beta}(1,1)$ target distribution, all methods except III appear to successfully converge to the target distribution.
The reason why method III (mean $x$, variance $0.1$) fails to converge to the target distribution is because
by \cref{thm:beta-existence}, there exists a Beta with mean $x$ and variance $v$ if and only if $|x - 1/2| < (1/2)\sqrt{1-4 v}$. Consequently, the sampler for Method III cannot reach any point outside this interval.

For the $\mathrm{Beta}(1,1000)$ target distribution, which is more challenging since it is concentrated near $0$, \cref{fig:ks-distance} indicates that only methods I (maximum density) and IV (mean $x$ with adaptive variance) successfully converge to the target distribution.
Here, even method II (mean $x$, fixed $\alpha=5$) fails to converge within the allotted number of MCMC iterations.  Since method II is valid, it should eventually converge, but it may take a very large number of iterations.
Method III fails again since it is invalid.

Overall, these results show that our maximum density method yields MH samplers that successfully converge to the target distribution (\cref{fig:ks-distance}) and do so much more rapidly than the mean-based methods (\cref{fig:autocorrelation}).


\subsection{Bayesian modeling of rare events} 

Using a good choice of prior is particularly important when performing Bayesian inference for the probability of rare events.  Suppose the true probability of the events is on the order of $1 / n$ or smaller, where $n$ is the number of observations. In such situations, the prior distribution can have a strong influence on posterior inferences. 

Consider a simple Bernoulli model: $Y_1,\ldots,Y_n$ i.i.d.\ $\sim\mathrm{Bernoulli}(\theta)$, where $Y_i$ is the observed binary outcome of event $i$ and $\theta$ is unknown.
A typical choice of prior on $\theta$ would be a Beta distribution with mean equal to a location $c\in(0,1)$ that one expects $\theta$ to be near, \emph{a priori}.
To illustrate, suppose we expect $\theta$ to be around $c = 10^{-3}$, and we use a prior of $\theta \sim \mathrm{Beta}(\alpha c, \alpha(1-c))$ with concentration parameter $\alpha = 10$, so that the prior mean is $c$.
Now, suppose the true parameter value is $\theta_0$ and the true data generating process is  $Y_1,\ldots,Y_n$ i.i.d.\ $\sim\mathrm{Bernoulli}(\theta_0)$, where $n = 100$.

One might hope that $\theta_0$ would be a typical value under the posterior distribution of $\theta$.
For instance, if the posterior is appropriately quantifying uncertainty about the true value, then 95\% credible intervals should contain $\theta_0$ most of the time -- ideally, around 95\% of the time for correct frequentist calibration.
However, \cref{fig:binomial-ex-fixed-target-location} (left) shows that this is not the case for the mean method. 
The coverage of 95\% highest posterior density (HPD) intervals is very low for a substantial range of $\theta_0$ values. Even at $\theta_0 = c = 10^{-3}$ (dotted line), where the true parameter equals the mean of the prior distribution, the coverage is only around 10\%. 
The reason is that the prior is concentrated near $0$ rather than around $c$, as illustrated in \cref{fig:problem}.
This also explains why the coverage jumps up to 100\% for very small values of $\theta_0$ (less than $\approx 10^{-4.5}$).
Thus, the mean method performs worst in the range of $\theta_0$ values where we want it to perform the best (near $c$).


We propose to instead use our maximum density method to choose the prior on $\theta$.  Specifically, we consider using a $\mathrm{Beta}(a,b)$ prior with $a$ and $b$ obtained via \cref{eq:max_density_alpha} with target location $c = 10^{-3}$ and concentration parameter $\alpha = 10$.
\cref{fig:binomial-ex-fixed-target-location} (left) shows that the resulting posteriors are better calibrated, in the sense that $\theta_0$ tends to fall within the 95\% HPD interval over a very wide range of $\theta_0$ values, even when the true parameter $\theta_0$ is not particularly close to the prior target location $c = 10^{-3}$. The coverage for both methods drops as $\theta_0$ approaches $1$, which makes sense since the prior location of $c = 10^{-3}$ is badly misspecified when $\theta_0$ is close to $1$. 

Even when the prior is centered at the true parameter, the mean method fails.
Suppose the prior is $\theta\sim\mathrm{Beta}(\alpha\theta_0,\alpha(1-\theta_0))$, so that the prior mean equals the true parameter $\theta_0$.
\cref{fig:binomial-ex-fixed-target-location} (right) shows that the resulting coverage is low for all values of $\theta_0$ less than around $10^{-2}$.
Meanwhile, specifying the prior using our maximum density approach with target location $c = \theta_0$ and concentration parameter $\alpha$ yields high coverage for all values of $\theta_0$; see \cref{fig:binomial-ex-fixed-target-location} (right). We use $\alpha = 10$ for both methods.
Of course, it is unrealistic to make the prior centered at the true value; the point is that even in this ideal situation, the mean method still fails.
In contrast, the maximum density method works well---not only in this ideal situation---but also in the realistic situation with a fixed prior that does not depend on $\theta_0$, as shown in \cref{fig:binomial-ex-fixed-target-location} (left).


\begin{figure}
    \centering
    \includegraphics[width=0.45\textwidth]{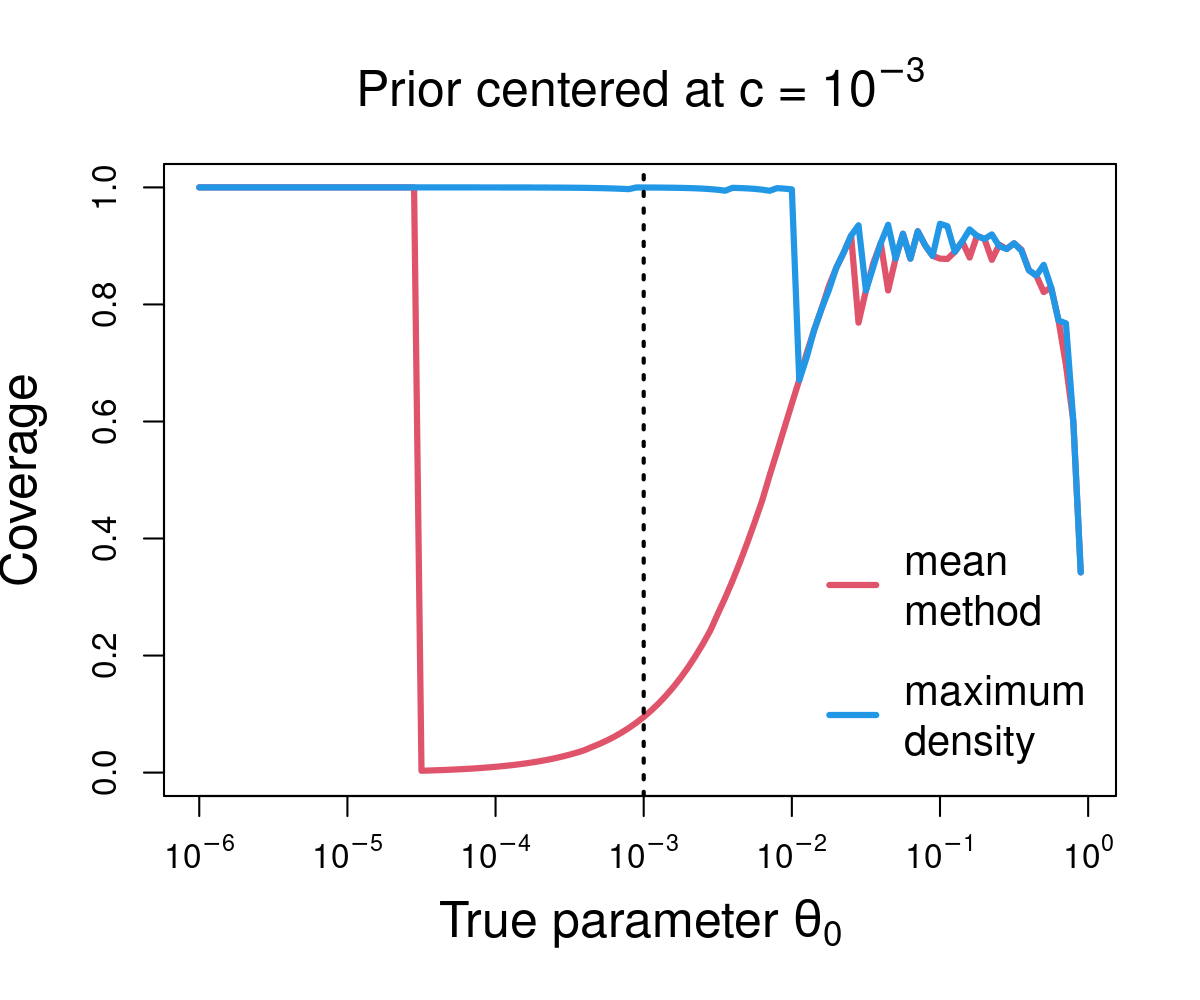}
    \includegraphics[width=0.45\textwidth]{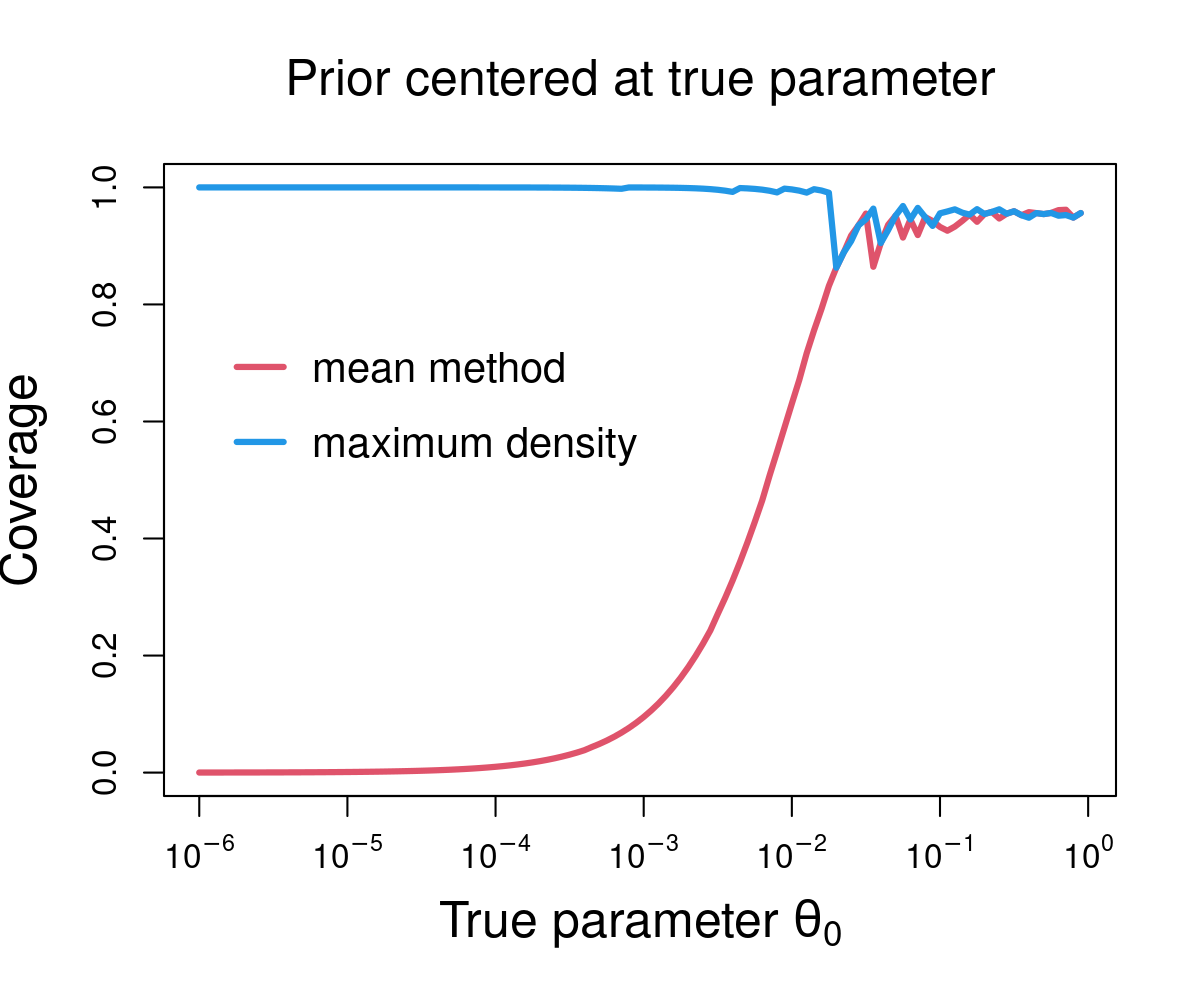}
    \caption{(\textit{Left}) Coverage of 95\% highest posterior density intervals as a function of the true parameter $\theta_0$, using Beta priors with concentration $\alpha=10$ and target location $c = 10^{-3}$ (vertical dotted line). 
    Note: The jagged nature of these lines is due to the discreteness of the distribution, not due to approximation error; the plots show the exact coverage.
    (\textit{Right}) Same as left, but using Beta priors with target location equal to the true parameter ($c = \theta_0$). The mean method performs very poorly for a wide range of small $\theta_0$ values, whereas our maximum density method performs well across the board.
    }
    \label{fig:binomial-ex-fixed-target-location}
\end{figure}


\subsection{Simulating random mutational signatures in cancer genomics}
\label{sec:mutational-signatures}

The set of mutations in a cancer genome represents the cumulative effect of numerous mutational processes such as environmental exposures and dysregulated cellular mechanisms.
It turns out that each mutational process tends to produce each type of mutation at a relatively constant rate, and these rates can be represented as a probability vector $c = (c_1,\ldots,c_K)$ referred to the corresponding ``mutational signature'' 
\citep{Nik_zainal_2012, Nik-Zainal_2016,Alexandrov_2013, Alexandrov_2020}.
Here, $c_i$ represents the rate at which mutation type $i$ occurs for the mutational process under consideration.
Usually, one considers the $K=96$ types of single-base substitution (SBS) mutations; see \cref{fig:mutational-signatures} (left) for examples.
The study of mutational signatures has been instrumental in advancing cancer research \citep{Koh_2021,Aguirre_2018, Rubanova_2020}.  

The Catalogue of Somatic Mutations in Cancer \citep[COSMIC;][]{Alexandrov_2020} publish a curated collection of signatures based on thousands of cancer genomes from a wide range of cancer types.
COSMIC signatures are widely used in cancer genome analysis, but it can be important to allow for departures from the COSMIC signatures due to cancer-specific or subject-specific variation \citep{Degasperi_2020, Zou_2021}.
This variation can be represented by a Dirichlet distribution centered at the signature of interest $c$ \citep{zito2024compressivebayesiannonnegativematrix}. 
However, if one uses $\mathrm{Dirichlet}(\alpha c_1, \ldots, \alpha c_K)$ where $\alpha > 0$ is the concentration, then the variability around $c$ depends strongly on the sparsity of $c$.  The standard measure of the discrepancy between two mutational signatures, say $x$ and $c$, is the cosine error, 
\begin{align}
    \mathrm{CosErr}(x,c) = 1 - \frac{x^\mathtt{T} c}{\|x\| \|c\|}
\end{align}
for $x,c\in\Delta_K$, where $\|x\| = \sqrt{\sum_i x_i^2}$ is the Euclidean norm.
\cref{fig:mutational-signatures} (right) shows that $\alpha$ provides poor control over the mean cosine error between a random signature $X\sim \mathrm{Dirichlet}(\alpha c_1, \ldots, \alpha c_K)$ and its mean $c$.  For any given value of $\alpha$, the mean cosine error can take a very wide range of values depending on the signature $c$.
Consequently, $\alpha$ does not represent the scale of variability around COSMIC signatures in a consistent way across signatures, when using the mean method.

\begin{figure}[t]
    \centering
    \includegraphics[width=\textwidth]{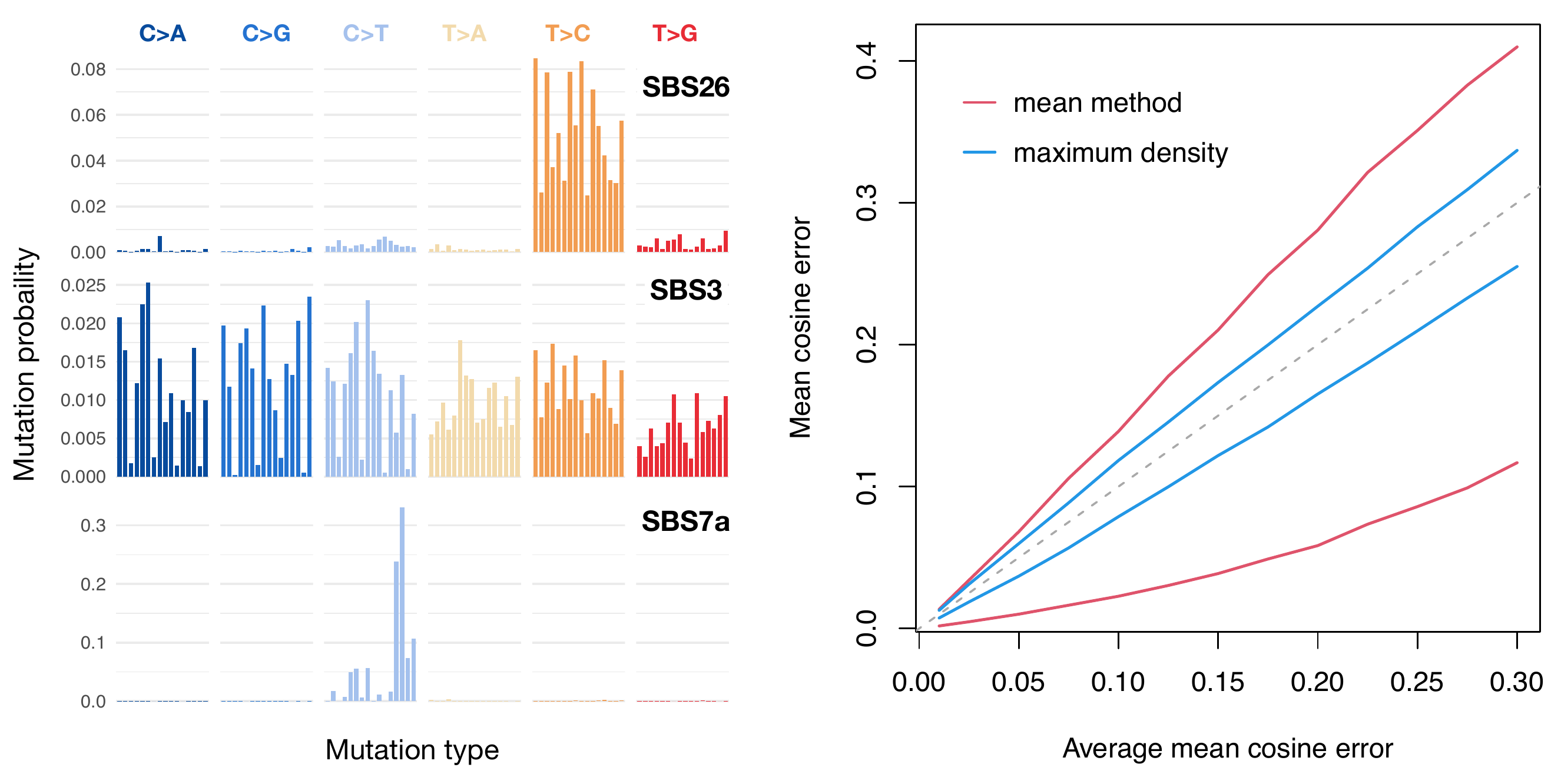}
    \caption{(\emph{Left}) Three examples of single-base substitution signatures in COSMIC v3.4: SBS26 is associated with defective DNA mismatch repair, SBS3 with homologous recombination deficiencies, and SBS7a with UV light exposure. (\emph{Right}) 25th and 75th percentiles of the empirical distribution of mean cosine errors across all 86 COSMIC v3.4 signatures.
    For each signature $c$, we compute the mean cosine error $\E(\mathrm{CosErr}(X,\E(X)))$ when $X$ follows a Dirichlet specified by the mean method as $\alpha$ varies (red lines) or maximum density method as $\kappa$ varies (blue lines) with target location $c$. The lines show the percentiles of the empirical distribution of these values across COSMIC signatures $c$, as a function of the average of this empirical distribution at given value of $\alpha$ or $\kappa$.}
    \label{fig:mutational-signatures}
\end{figure}

We propose to instead use our maximum density method to specify Dirichlet distributions for representing variability around mutational signatures.  
Specifically, for a given COSMIC signature $c$ and a desired mean cosine error $\kappa$, we aim to maximize $\mathrm{Dirichlet}(c\mid a_1,\ldots,a_K)$ subject to the constraint that $\E(\mathrm{CosErr}(X,\E(X))) = \kappa$, where $X\sim\mathrm{Dirichlet}(a_1,\ldots,a_K)$.  Since the mean cosine error $\E(\mathrm{CosErr}(X,\E(X)))$ is not mathematically tractable, we use the expected value of a second-order Taylor approximation to the cosine error; see \cref{sec:taylor-cosine} for details.
We then use \cref{alg:dirichlet} to maximize $\mathrm{Dirichlet}(c\mid a_1,\ldots,a_K)$ subject to constraining this approximation to the mean cosine error; see \cref{sec:derivation-of-algorithm} for the precise formulas we use for the constraint function $h(a)$ and its Jacobian matrix $J$.
We find that this algorithm converges reliably in all of the settings we have tried.

\cref{fig:mutational-signatures} (right) shows that our maximum density method provides much better control over the mean cosine error.  
More precisely, the distribution of mean cosine errors across COSMIC signatures is much tighter, meaning that mean cosine error is being more effectively controlled.
This is shown by having a smaller gap between 25th and 75th percentiles of the distribution of mean cosine errors across COSMIC signatures.
This demonstrates that the maximum density method provides better control over the scale of Dirichlet distributions in this real world application.
This example also illustrates the flexibility of the maximum density method, since one can choose a measure of scale that is relevant for the application at hand.







\section{Discussion}
\label{sec:discussion}

The usual way of specifying Beta and Dirichlet distributions, by setting the mean equal to the target location, is prone to exhibit pathological behavior near the boundary. 
This underappreciated problem may lead to poorly performing MCMC algorithms or unintentionally strong priors that induce significant bias.
We introduce a novel approach that provides better control over the location and scale of these distributions.

This issue---and our proposed solution---may be relevant in a wide range of applications in which Dirichlet distributions play a central role, including
mutational signatures analysis \citep{zito2024compressivebayesiannonnegativematrix},
microbiome analysis of compositional count matrices \citep{chen2013variable}, species abundance data \citep{Bersson_Hoff_2024},
document analysis with topic models \citep{blei2003latent},
population structure analysis with admixture models \citep{pritchard2000inference},
and applications of mixture models \citep{Miller_Harrison_2018}. 

Another application of particular interest is variable selection with global-local shrinkage priors. 
Dirichlet--Laplace priors \citep{Bhattacharya_2015, Zhang_Bondell_2018} are used to model the local shrinkage parameters of the mean of each observation using a Dirichlet distribution, while the global variance is assigned a gamma prior. With current methods, inference is only feasible using Gibbs sampling for very specific choices of the gamma hyperparameters. Our maximum density method may facilitate the design of efficient proposals for the local parameters when they are close to zero.

Finally, the maximum density method could be applied to other distributions, beyond the Beta and Dirichlet.
The technique of maximizing the density at a target location, subject to a scale constraint, makes sense for many families of distributions.

\section*{Acknowledgments}
C.X.\ was supported by NIH Training Grant T32GM135117 and NSF Graduate Research Fellowship DGE-2140743. 
J.W.M.\ and A.Z.\ were supported in part by the National Cancer Institute of the National Institutes of Health under award number R01CA240299. 
The content is solely the responsibility of the authors and does not necessarily represent the official views of the National Institutes of Health.

\section*{Supplementary Material}
The Supplementary Material includes further results and analyses. 
Code implementing the method and examples in the paper is publicly available via at \url{https://github.com/casxue/ImprovedDirichlet}.

\bibliographystyle{ba}
\bibliography{references}


\newpage
\setcounter{page}{1}
\setcounter{section}{0}
\setcounter{table}{0}
\setcounter{figure}{0}
\setcounter{theorem}{0}
\setcounter{algorithm}{0}
\renewcommand{\theHsection}{SIsection.\arabic{section}}
\renewcommand{\theHtable}{SItable.\arabic{table}}
\renewcommand{\theHfigure}{SIfigure.\arabic{figure}}
\renewcommand{\theHtheorem}{SItheorem.\arabic{theorem}}
\renewcommand{\theHalgorithm}{SIalgorithm.\arabic{algorithm}}
\renewcommand{\thepage}{S\arabic{page}}  
\renewcommand{\thesection}{S\arabic{section}}   
\renewcommand{\thetable}{S\arabic{table}}  
\renewcommand{\thefigure}{S\arabic{figure}}
\renewcommand{\thetheorem}{S\arabic{theorem}}   
\renewcommand{\thealgorithm}{S\arabic{algorithm}}

\makeatletter
\def\ps@myheadings{%
  \def\@oddhead{}%
  \def\@evenhead{}%
  \def\@oddfoot{\hfil \thepage \hfil}%
  \def\@evenfoot{\hfil \thepage \hfil}}
\makeatother

\pagestyle{myheadings}


\begin{center}
{\Large Supplementary Material for \\``Improved control of Dirichlet location and scale near the boundary''}
\end{center}

\section{Proofs}
\label{sec:proofs}

This section provides the proofs of the results in the paper.

\begin{proof}[\bf Proof of \cref{thm:optimization}]
Let $\Theta = (0,\infty)^2\subseteq\mathbb{R}^2$.
Define $f:\Theta\to\mathbb{R}$ by 
$f(a,b) = -\log\mathrm{Beta}(c\mid a,b)$ and define $S = \{(a,b)\in\Theta : V(a,b) = v\}$.
In terms of $f$ and $S$, the optimization problem in \cref{thm:optimization} is 
\begin{align}\label{eq:opt-problem}
    \argmin_{a,b>0} f(a,b) \text{ subject to } (a,b)\in S.
\end{align}
First, $S$ is nonempty since by \cref{thm:beta-existence} there exists a Beta distribution with variance $v$ whenever $0 < v < 1/4$.
Furthermore, $S$ is bounded, since
\begin{align*}
    \min\{1/a,1/b\} > \frac{1}{a+b+1} \geq \frac{a}{a+b}\,\frac{b}{a+b}\,\frac{1}{a+b+1} = V(a,b) = v > 0
\end{align*}
for all $(a,b)\in S$, and therefore $\max\{a,b\} < 1/v$ for all $(a,b)\in S$.  
Now, for all $a,b\in(0,1]$, we have $\mathrm{B}(a,b) \geq (a + b - a b)/(a b) = 1/a + 1/b - 1$, by \citet{ivady2016}; also see \citet{zhao2023lower}.
Therefore, for $a,b\in(0,1)$,
\begin{align}\label{eq:obj-inequality}
    f(a,b) &= -(a-1)\log(c) - (b-1)\log(1-c) + \log\mathrm{B}(a,b) \notag \\
    &\geq \log(c) + \log(1-c) + \log(1/a + 1/b - 1).
\end{align}
Let $(a',b')$ be any point in $S$.
By \cref{eq:obj-inequality}, we can choose $\varepsilon\in(0,1)$ small enough that $f(a,b) > f(a',b') + 1$ for all $(a,b)$ such that $a<\varepsilon$ or $b<\epsilon$.
Define $A = [\varepsilon,\infty)^2$.
It follows that $(a',b')\in A\cap S$ and $f(a',b') < \inf\{f(a,b) : (a,b)\in\Theta\setminus A\}$. Thus, no point of $\Theta\setminus A$ can be a solution, so minimizing over $S$ is equivalent to minimizing over $A\cap S$.  To complete the proof, we just need to show that the minimum of $f$ over $A\cap S$ is attained at some point; any such point will be a solution to \cref{eq:opt-problem}.

Let $V|_A$ denote the restriction of $V$ to $A$.  Then $V|_A$ is a continuous function, so the pre-image of $\{v\}$ is closed, that is, $\{(a,b)\in A : V(a,b) = v\} = A \cap S$ is a closed subset of $A$, and hence, a closed subset of $\mathbb{R}^2$.
Since $A\cap S$ is also bounded (as a subset of the bounded set $S$), it is a compact set.
Therefore, the continuous function $f$ attains its minimum on $A\cap S$.
\end{proof}

\begin{theorem}
    \label{thm:optimization-alpha}
    For all $c\in(0,1)$ and $\alpha\in(0,\infty)$, there exists a finite solution to:
    $$ \argmax_{a,b>0} \mathrm{Beta}(c\mid a,b) \text{ subject to } a + b = \alpha. $$
\end{theorem}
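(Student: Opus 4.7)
The plan is to mirror the strategy used in the proof of \cref{thm:optimization}, replacing the variance constraint with the linear constraint $a+b=\alpha$. Let $\Theta = (0,\infty)^2$, define $f(a,b) = -\log \mathrm{Beta}(c\mid a,b)$, and set $S = \{(a,b)\in\Theta : a+b=\alpha\}$. The goal is then to show that
\begin{equation*}
    \argmin_{(a,b)\in S} f(a,b)
\end{equation*}
is attained at a finite point. The set $S$ is clearly nonempty (for example, $(a,b)=(\alpha/2,\alpha/2)$) and bounded, since $a,b\in(0,\alpha)$ for all $(a,b)\in S$. So $S$ is already automatically bounded and we only need coercivity of $f$ at the two open endpoints of the segment.

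First I would parameterize $S$ by $a\in(0,\alpha)$ with $b=\alpha-a$, giving $g(a) = f(a,\alpha-a)$. Using $\mathrm{B}(a,b)=\Gamma(a)\Gamma(b)/\Gamma(a+b)$, write
\begin{equation*}
    g(a) \;=\; -(a-1)\log c \;-\; (\alpha-a-1)\log(1-c) \;+\; \log\Gamma(a) \;+\; \log\Gamma(\alpha-a) \;-\; \log\Gamma(\alpha).
\end{equation*}
The first two terms stay bounded on $(0,\alpha)$. For the gamma terms, I would use the standard fact $\Gamma(a)\sim 1/a$ as $a\to 0^+$, so $\log\Gamma(a)\to +\infty$ as $a\to 0^+$ and, symmetrically, $\log\Gamma(\alpha-a)\to+\infty$ as $a\to\alpha^-$. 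Therefore $g(a)\to+\infty$ at both endpoints of $(0,\alpha)$. (Alternatively, one could invoke the same Ivády bound $\mathrm{B}(a,b)\ge 1/a+1/b-1$ used for \cref{thm:optimization} whenever $a$ or $b$ lies in $(0,1]$, but directly using $\log\Gamma(a)\to\infty$ is cleaner here.)

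Now fix any point $(a',\alpha-a')\in S$. By the coercivity just established, I can pick $\varepsilon\in(0,\alpha/2)$ small enough that $g(a) > g(a') + 1$ whenever $a\in(0,\varepsilon]\cup[\alpha-\varepsilon,\alpha)$. Then the minimum of $g$ over $(0,\alpha)$ must lie in the compact interval $[\varepsilon,\alpha-\varepsilon]$, and since $g$ is continuous there, the minimum is attained at some $a^\star\in[\varepsilon,\alpha-\varepsilon]$. Setting $b^\star=\alpha-a^\star$ gives a finite solution $(a^\star,b^\star)\in S$ to the original $\argmax$ problem.

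I do not expect a real obstacle here; the proof is structurally identical to that of \cref{thm:optimization}, and the only substantive step is verifying the boundary blow-up of $f$, which follows immediately from the pole of $\Gamma$ at $0$. If desired, the argument generalizes verbatim to the Dirichlet version of \cref{eq:max_density_dirichlet} under a linear concentration constraint, by parameterizing $S$ as a bounded simplex and invoking the same coercivity of $\log\Gamma(a_i)$ at each boundary facet $a_i\to 0^+$.
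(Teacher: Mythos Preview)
Your proposal is correct and follows essentially the same route as the paper, which simply notes that the proof of \cref{thm:optimization} goes through verbatim once one observes that $S=\{(a,b)\in\Theta:a+b=\alpha\}$ is nonempty and bounded. The only cosmetic difference is that you establish the boundary blow-up of $f$ via the pole of $\Gamma$ at $0$ (after reducing to a one-dimensional parameterization), whereas the paper's proof of \cref{thm:optimization} uses the Iv\'ady bound $\mathrm{B}(a,b)\geq 1/a+1/b-1$; both yield the same coercivity and you already flag this alternative.
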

\begin{proof}
The proof is essentially the same as the proof of \cref{thm:optimization}, except that $S = \{(a,b)\in\Theta : a + b = \alpha\}$. All that is needed is to recognize that $S$ is a nonempty, bounded set, and the rest of the proof is the same.
\end{proof}

\begin{proof}[\bf Proof of \cref{thm:beta-existence}]
Reparametrize the $\mathrm{Beta}(a,b)$ distribution in terms of the mean $u = a/(a+b)$ and concentration $\alpha = a+b$, so that $a = \alpha u$ and $b = \alpha(1-u)$.
The variance $v$ of $\mathrm{Beta}(a,b)$ is related to $u$ and $\alpha$ via
\begin{align}
\label{eq:v}
v = \frac{a b}{(a + b)^2 (a + b + 1)} = \frac{u(1-u)}{\alpha + 1}.
\end{align}
Thus, solving for $\alpha$ as a function of $u$ and $v$ yields that $\alpha = u(1-u)/v - 1$.
For any $v>0$, define 
$$ f_v(u) = \frac{u(1-u)}{v} - 1. $$
Since $\alpha$ must be positive to be the concentration of a Beta distribution, only a subset of $u$ and $v$ values can be the mean and variance of a Beta distribution.
More precisely, for any $u\in(0,1)$ and $v>0$, there is a Beta distribution with mean $u$ and variance $v$ if and only if $f_v(u) > 0$.

Note that $u \mapsto u(1-u)$ is a quadratic function with a maximum of $u(1-u) = 1/4$ at $u = 1/2$.
Along with Equation \ref{eq:v}, this implies that the variance must satisfy $0 < v < 1/4$ since $\alpha > 0$.
To find the feasible range of $u$ values for any given $v$, we set
$$ 0 = f_v(u) = \frac{u(1-u)}{v} - 1 = - u^2/v + u/v - 1 $$
and solve using the quadratic formula to find
\begin{align}
u_* &= \frac{-1/v + \sqrt{1/v^2 - 4/v}}{-2/v} = 1/2 - (1/2)\sqrt{1 - 4 v} \label{eq:u_lower} \\
u^* &= \frac{-1/v - \sqrt{1/v^2 - 4/v}}{-2/v} = 1/2 + (1/2)\sqrt{1 - 4 v}. \label{eq:u_upper}
\end{align}
Thus, for any $v \in (0,1/4)$, we have $f_v(u) > 0$ if and only if $u_* < u < u^*$, where $u_*$ and $u^*$ given by Equations \ref{eq:u_lower} and \ref{eq:u_upper}.
Therefore, there is a Beta distribution with mean $u$ and variance $v$ if and only if $0 < v < 1/4$ and $|u - 1/2| < (1/2)\sqrt{1 - 4 v}$.
\end{proof}

\section{Distance between logit-transformed values}
\label{sec:logit-derivation}

In this section, we derive the density of the distance between a Beta random variable and a given point, after logit transformation.
Furthermore, we describe how to compute this density in a numerically stable way.

Fix $a>0$, $b>0$, and $c\in(0,1)$.  Let $X\sim\mathrm{Beta}(a,b)$ and define
\begin{align*}
    Y = \big\vert\mathrm{logit}(X) - \mathrm{logit}(c)\big\vert,
\end{align*}
where $\mathrm{logit}(x) = \log(x/(1-x))$ for $x\in(0,1)$. 
Let $f_X(x)$ denote the probability density function of $X$, namely, \begin{align*}
    f_X(x) = \frac{1}{\mathrm{B}(a,b)} x^{a-1}(1-x)^{b-1} \mathds{1}(0<x<1).
\end{align*}
In \cref{sec:density-derivation}, we show that the probability density function of $Y$ is
\begin{align}
\label{eq:logit-density-y}
    f_Y(y) = f_X\bigg(\frac{1}{1 + e^{y-\ell}}\bigg) \frac{e^{y-\ell}}{(1 + e^{y-\ell})^2} + f_X\bigg(\frac{1}{1 + e^{-y-\ell}}\bigg) \frac{e^{-y-\ell}}{(1 + e^{-y-\ell})^2}
\end{align}
for $y\in(0,\infty)$, and $f_Y(y) = 0$ otherwise.

\subsection{Numerically stable computation of the density of $Y$}

Computing $f_Y(y)$ requires careful handling of the exponentials in order to avoid numerical underflow or overflow issues. The first point to note is that one should work with logarithms rather than the values themselves. Specifically, compute the log of each term in \cref{eq:logit-density-y}, and combine them using the ``logsumexp'' trick to obtain $\log f_Y(y)$, that is, $\log(\exp(r) + \exp(s)) = \log(\exp(r-m) + \exp(s-m)) + m$ where $m = \max\{r,s\}$.
The next point is that computing $\log(1 + e^x)$ is prone to numerical issues: When $x \ll 0$, it rounds off to $0$, whereas when $x \gg 0$, it overflows to $\infty$.  This can be fixed by (i) using a $\texttt{log1p}$ function, which computes $\log(1+t)$ in a numerically accurate way for small $t$, along with (ii) a conditional to avoid cases where $e^x$ would overflow, for instance:
\begin{align*}
\log(1 + e^x) \approx \begin{cases}
        \texttt{log1p}(\texttt{exp}(x)) & \text{if } x < 0 \\
        \texttt{log1p}(\texttt{exp}(-x))+x & \text{if } x \geq 0.
    \end{cases}  
\end{align*}

\subsection{Derivation of the density of $Y$}
\label{sec:density-derivation}

To derive \cref{eq:logit-density-y}, we use Jacobi's formula for transformation of continuous random random variables to derive the probability density function of $Y$, as follows.
Let $A_0 = \{c\}$, $A_1 = (0,c)$, and $A_2 = (c,1)$, noting that  $\{A_0,A_1,A_2\}$ is a partition of $\mathcal{X} := \{x : f_X(x) > 0\} = (0,1)$ such that $P(X \in A_0) = 0$.
Define $g(x) = |\mathrm{logit}(x) - \mathrm{logit}(c)|$, and observe that 
$g$ is strictly monotone on each of $A_1$ and $A_2$, separately.
Furthermore, letting $g_1$ and $g_2$ be the restrictions of $g$ to $A_1$ and $A_2$, respectively, it holds that the inverse $g_i^{-1}$ exists and is continuously differentiable on $g(A_i) = \{g(x) : x\in A_i\}$, for $i=1,2$.
Thus, by Jacobi's transformation formula \citep{casella2024statistical,jacod2012probability}, the probability density of $Y$ is
\begin{align}
\label{eq:logit-density}
    f_Y(y) = \sum_{i=1}^2 f_X(g_i^{-1}(y))\Big\vert\frac{d}{d y} g_i^{-1}(y)\Big\vert \mathds{1}(y\in g(A_i)).
\end{align}
To use this formula, we just need to derive $g_i^{-1}$ and its derivative for $i=1,2$.
To this end, observe that 
\begin{align*}
    g_1(x) = \log\Big(\frac{c}{1-c}\Big) - \log\Big(\frac{x}{1-x}\Big) & ~~~~~~~~ & x\in (0,c), \\
    g_2(x) = \log\Big(\frac{x}{1-x}\Big) - \log\Big(\frac{c}{1-c}\Big) & ~~~~~~~~ & x\in (c,1).
\end{align*}
Define $\ell = \log(c / (1-c))$ to simplify the notation.  Solving for the inverses, we obtain
\begin{align}
    g_1^{-1}(y) = \frac{1}{1 + e^{y - \ell}} & ~~~~~~~~ & y\in(0,\infty), \label{eq:logit-inverses1} \\
    g_2^{-1}(y) = \frac{1}{1 + e^{-y - \ell}} & ~~~~~~~~ & y\in(0,\infty), \label{eq:logit-inverses2}
\end{align}
and differentiating, we have
\begin{align}
    \frac{d}{d y} g_1^{-1}(y) = \frac{-e^{y - \ell}}{(1 + e^{y - \ell})^2} & ~~~~~~~~ & y\in(0,\infty), \label{eq:logit-derivatives1}\\
    \frac{d}{d y}g_2^{-1}(y) = \frac{e^{-y-\ell}}{(1 + e^{-y - \ell})^2} & ~~~~~~~~ & y\in(0,\infty). \label{eq:logit-derivatives2}
\end{align}
Plugging \cref{eq:logit-inverses1,eq:logit-inverses2,eq:logit-derivatives1,eq:logit-derivatives2} into \cref{eq:logit-density} yields 
\begin{align}
    f_Y(y) = f_X\bigg(\frac{1}{1 + e^{y-\ell}}\bigg) \frac{e^{y-\ell}}{(1 + e^{y-\ell})^2} + f_X\bigg(\frac{1}{1 + e^{-y-\ell}}\bigg) \frac{e^{-y-\ell}}{(1 + e^{-y-\ell})^2}
\end{align}
for $y\in(0,\infty)$, and $f_Y(y) = 0$ otherwise.

\section{Derivation of the optimization algorithm}
\label{sec:derivation-of-algorithm}

In this section, we derive \cref{alg:dirichlet}
based on Newton's method with equality constraints \citep{boyd2004convex}.
This technique provides a second-order optimization algorithm for solving problems of the form
\begin{align*}
    \arg\min_{x} f(x) \text{ subject to } h(x) = 0,
\end{align*}
where $f$ is a strictly convex, twice-differentiable function and $h$ is differentiable. 
Let $g(x) = \nabla f(x) = (\partial f/\partial x_i)$ denote the gradient of $f$, let $H(x) = \nabla^2 f(x) = (\partial^2 f/\partial x_i\partial x_j)$ denote the Hessian matrix of $f$, and let $J(x) = (\partial h_i / \partial x_j)$ denote the Jacobian matrix of $h$.
After initializing $x$ to an appropriate value, each iteration proceeds by updating 
\begin{align}\label{eq:newton-update}
    x \gets x + \delta
\end{align}
where the vector $\delta$ is defined by solving the linear system 
\begin{align}\label{eq:linear-system}
    \begin{bmatrix}
        H(x) & J(x)^\mathtt{T} \\
        J(x) & 0
    \end{bmatrix}
    \begin{bmatrix}
        \delta \\ \lambda
    \end{bmatrix}
    = 
    \begin{bmatrix}
        -g(x) \\ -h(x)
    \end{bmatrix}.
\end{align}
Here, $\lambda$ is a vector of multipliers that we will not use in our algorithm.

\subsection{Applying Newton's method to the maximum density method}

To apply this to implement the maximum density method as in \cref{eq:max_density_dirichlet}, we define $x = a = (a_1,\ldots,a_K)$ and
\begin{align*}
    f(a) = -\log \mathrm{Dirichlet}(c\mid a) 
    = \sum_{i=1}^K \log\Gamma(a_i) - \log\Gamma\big({\textstyle\sum_{i=1}^K} a_i\big) - \sum_{i=1}^K (a_i - 1)\log(c_i).
\end{align*}
Then $f$ is smooth and strictly convex on $(0,\infty)^K$ by \citet[Proposition 19]{miller2014inconsistency}.
For the constraint in the Dirichlet case, we consider two options: (i) fixed concentration $\alpha$,  and
(ii) fixed mean cosine error $\kappa$.
For the special case of Beta distributions, see \cref{sec:derivation-of-algorithm-beta}.
We show how to handle each of these constraints below.

The gradient and Hessian of $f(a)$ are obtained by differentiating:
\begin{align*}
    \frac{\partial f}{\partial a_i}(a) &= \psi(a_i) - \psi\big({\textstyle\sum_i} a_i\big) - \log(c_i), \\
    \frac{\partial^2 f}{\partial a_i \partial a_j}(a) &= \psi'(a_i)\mathds{1}(i=j) - \psi'\big({\textstyle\sum_i} a_i\big).
\end{align*}
Thus, the functions $g$ and $H$ take the following forms, where $s = \sum_i a_i$:
\begin{align*}
    g(a) &= \nabla f(a) = \big[\psi(a_i) - \psi(s) - \log(c_i)\big]\in\mathbb{R}^{K\times 1} \\
    H(a) &= \nabla^2 f(a) = \big[\psi'(a_i)\mathds{1}(i=j) - \psi'(s)\big]\in\mathbb{R}^{K\times K}
\end{align*}
where $\psi(x) = (d/d x) \log\Gamma(x)$ is the digamma function and $\psi'(x) = (d/d x) \psi(x)$ is the trigamma function.

\subsubsection{Constraining the concentration parameter}
First, we consider constraining the  concentration parameter $\alpha$,
in which case we define
\begin{align}\label{eq:fixed-concentration-constraint}
    h(a) = \frac{\sum_{i=1}^K a_i}{\alpha} - 1.
\end{align}
Note that $h(a) = 0$ if and only if $\alpha = \sum_{i=1}^K a_i$.
In this case, $h$ is a linear function of $a$, which is conducive for convergence since we are optimizing a strictly convex function over a convex set.
The rationale for defining $h$ using the ratio $(\sum_i a_i) / \alpha$ rather than the difference $(\sum_i a_i) - \alpha$ is so that the convergence tolerance can be relatively invariant to the magnitude of $\alpha$; when using the ratio, the number of significant digits is what matters.
The Jacobian matrix is obtained by differentiating \cref{eq:fixed-concentration-constraint},
\begin{align*}
    \frac{\partial h}{\partial a_i} = \frac{\partial}{\partial a_i} \Big(\frac{\sum_{i=1}^K a_i}{\alpha} - 1\Big) = \frac{1}{\alpha},
\end{align*}
and thus, 
\begin{align*}
    J(a) = \begin{bmatrix}
        1/\alpha & \cdots & 1/\alpha
    \end{bmatrix}.
\end{align*}

\subsubsection{Constraining the mean cosine error}
For the case of mean cosine error, we use the following approximation to the mean cosine error between $X\sim\mathrm{Dirichlet}(a_1,\ldots,a_K)$ and $\mathbb{E}(X)$,
\begin{align}
    \mathbb{E}(\mathrm{CosErr}(X, \mathbb{E}(X))) \approx \frac{\sum_i a_i}{2(1+\sum_i a_i)(\sum_i a_i^2)}\bigg({\textstyle\sum_i} a_i - \frac{\sum_i a_i^3}{\sum_i a_i^2}\bigg)
\end{align}
where each sum is over $i=1,\ldots,K$; see \cref{sec:taylor-cosine} for the derivation.
For the constraint function $h$, we work with the logarithm, defining
\begin{align*}
    h(a) = & -\log(2) + \log({\textstyle\sum_i} a_i) - \log(1 + {\textstyle\sum_i} a_i) - \log({\textstyle\sum_i} a_i^2) \\
   & + \log\Big({\textstyle\sum_i} a_i - \frac{\sum_i a_i^3}{\sum_i a_i^2}\Big) - \log(\kappa)
\end{align*}
where $\kappa > 0$ is the desired mean cosine error.
The rationale for defining $h$ using the logarithm is so that the derivatives take mathematically simple forms.  Additionally, this expresses the constraint in terms of  the ratio rather than the difference, which is advantageous for the same reasons discussed in the concentration parameter case.
Differentiating, we have
\begin{align*}
\frac{\partial h}{\partial a_j} = 
\frac{1}{\sum_i a_i} - \frac{1}{1 + \sum_i a_i} - \frac{2 a_j}{\sum_i a_i^2} + \frac{1 - \big(3 a_j^2 (\sum_i a_i^2) - 2 a_j (\sum_i a_i^3)\big)/(\sum_i a_i^2)^2}{(\sum_i a_i) - (\sum_i a_i^3)/(\sum_i a_i^2)}.
\end{align*}
Defining $s_k = \sum_i a_i^k$ for $k\in\{1,2,3\}$, the formulas can be expressed more compactly as
\begin{align*}
    h(a) &= -\log(2) + \log(s_1) - \log(1+s_1) - \log(s_2) + \log(s_1 - s_3/s_2) - \log(\kappa)\\
    \frac{\partial h}{\partial a_j} &= \frac{1}{s_1} - \frac{1}{1+s_1} - \frac{2 a_j}{s_2} + \frac{1 - (3 a_j^2 s_2 - 2 a_j s_3)/s_2^2}{s_1 - s_3/s_2}.
\end{align*}
Even though $h$ is a nonlinear function of $a$, we find that the algorithm still successfully converges once some implementation details have been addressed; we discuss this next.

\subsection{Implementation details}
\label{sec:implementation-details}

To implement the basic version of the algorithm described above, at each iteration we would compute the expressions for $g(a)$, $H(a)$, and $J(a)$, plug them into \cref{eq:linear-system}, use a linear solver to obtain $\delta$, and update $a$ as in \cref{eq:newton-update}.
In practice, we modify the algorithm to improve its numerical stability. First, we modify \cref{eq:newton-update} to use a step size of $\rho\in(0,1)$, so that the update is
\begin{align}\label{eq:newton-update-stepsize}
    a \gets a + \rho\,\delta.
\end{align}
Second, we enforce the boundary constraint that $a_1,\ldots,a_K$ must be positive numbers as follows: for each $i=1,\ldots,K$, if $a_i$ would be less than or equal to zero after the update in \cref{eq:newton-update-stepsize}, then we instead update $a_i \gets a_i/2$ for that $i$. This maintains positivity, but still moves $a_i$ in the direction of the full Newton step.

We begin with an initial step size of $\rho = 1/2$, and we initialize the algorithm at $a = 10(c+1)/2$, that is, $a_i = 10(c_i+1)/2$ for $i=1,\ldots,K$.
As a stopping criterion, we halt the algorithm after either (i) a maximum number of iterations $\texttt{maxiter}$ has been reached (default: $\texttt{maxiter} = 100$), or (ii) $|h(a)| + \sum_{i=1}^K |a_i/a_i' - 1| < \texttt{tol}$, where $a_1',\ldots,a_K'$ are the values of $a_1,\ldots,a_K$ from the previous iteration and $\texttt{tol} > 0$ is a convergence tolerance (default: $\texttt{tol} = 10^{-8}$).

If the algorithm reaches the maximum number of iterations without converging to within the specified tolerance, we restart the algorithm with $\rho \gets \rho/5$ and $\texttt{maxiter} \gets 5\,\texttt{maxiter}$. If the algorithm still fails after 5 adaptive restarts of this form, we stop and report that the procedure has failed.
We find that this adaptive procedure reliably yields successful convergence for a wide range of values.

\subsection{Algorithm in the special case of Beta distributions}
\label{sec:derivation-of-algorithm-beta}

For concreteness, \cref{alg:beta} provides a version of \cref{alg:dirichlet} that is specialized to Beta distributions. To constrain the concentration to some value $\alpha$, lines 8--10 of \cref{alg:beta} become
$h \gets (a+b)/\alpha - 1$, $J_1 \gets 1/\alpha$, and $J_2 \gets 1/\alpha$.
Meanwhile, to constrain the variance to some value $v\in(0,1/4)$, lines 8--10 become
\begin{align*}
& h \gets \log(a) + \log(b) - 2\log(a+b) - \log(a+b+1) - \log(v) \\
& J_1 \gets 1/a - 2/(a+b) - 1/(a+b+1) \\
& J_2 \gets 1/b - 2/(a+b) - 1/(a+b+1).
\end{align*}

\begin{algorithm}
\caption{~~Maximum density method for Beta distributions}\label{alg:beta}
\begin{algorithmic}[1]
\Require{target location $c\in(0,1)$, step size $\rho\in(0,1]$, maximum number of iterations $\texttt{maxiter}>0$, and convergence tolerance $\texttt{tol}>0$.}
\State $a \gets c$ and $b \gets 1-c$ \Comment{Initialization}
\For{$i = 1,\ldots,\texttt{maxiter}$}
    \State $g_1 \gets -\log(c) - \psi(a+b) + \psi(a)$ \Comment{Compute gradient}
    \State $g_2 \gets -\log(1-c) - \psi(a+b) + \psi(b)$
    \State $H_{1 1} \gets -\psi'(a+b) + \psi'(a)$ \Comment{Compute Hessian matrix}
    \State $H_{2 2} \gets -\psi'(a+b) + \psi'(b)$
    \State $H_{1 2} \gets -\psi'(a+b)$
    \State $h \gets h(a,b)$ \Comment{Constraint}
    \State $J_1 \gets \partial h/\partial a$ \Comment{Jacobian of constraint}
    \State $J_2 \gets \partial h/\partial b$
    \State Solve for $\delta_1$ and $\delta_2$ in the linear system: \Comment{Compute Newton step}
        \begin{align*}
        \begin{bmatrix}
            H_{1 1} & H_{1 2} & J_1 \\
            H_{1 2} & H_{2 2} & J_2 \\
            J_1     & J_2     & 0
        \end{bmatrix}
        \begin{bmatrix}
            \delta_1 \\ \delta_2 \\ \lambda
        \end{bmatrix}
        =
        \begin{bmatrix}
            -g_1 \\ -g_2 \\ -h
        \end{bmatrix}
        \end{align*}
    \State $a' \gets a$ and $b' \gets b$  \Comment{Store current values}
    \State $a \gets a' + \rho\,\delta_1$ and $b \gets b' + \rho\,\delta_2$ \Comment{Update values}
    \State \textbf{if} $a \leq 0$ \textbf{then} $a \gets a'/2$  \Comment{Enforce boundary constraints}
    \State \textbf{if} $b \leq 0$ \textbf{then} $b \gets b'/2$
    \If{$|h| + |a/a' - 1| + |b/b' - 1| < \texttt{tol}$} \Comment{Check for convergence}
    \State \textbf{output} $a$ and $b$ \Comment{Return output}
    \EndIf
\EndFor
\Ensure{Beta parameters $a$ and $b$.}
\end{algorithmic}
\end{algorithm}

\section{Additional details on the median method}
\label{sec:median}

As discussed in \cref{sec:method}, an alternative approach is to choose a Beta distribution with median equal to the target location $c$, and with either a given variance $v$ or a given concentration parameter $\alpha$. 
\cref{fig:median} shows the plots of CDFs and percentiles for the median method with $c\in\{0.001,0.2\}$, for a range of $\alpha$ values  (compare with \cref{fig:problem,fig:method}).
Like maximum density, the median method provides better control over the location and scale of Beta distributions than the mean method. The median method has the further advantage of directly centering the distribution at $c$ (in terms of the median) by construction.


\begin{figure}
    \centering
    \includegraphics[width=\textwidth]{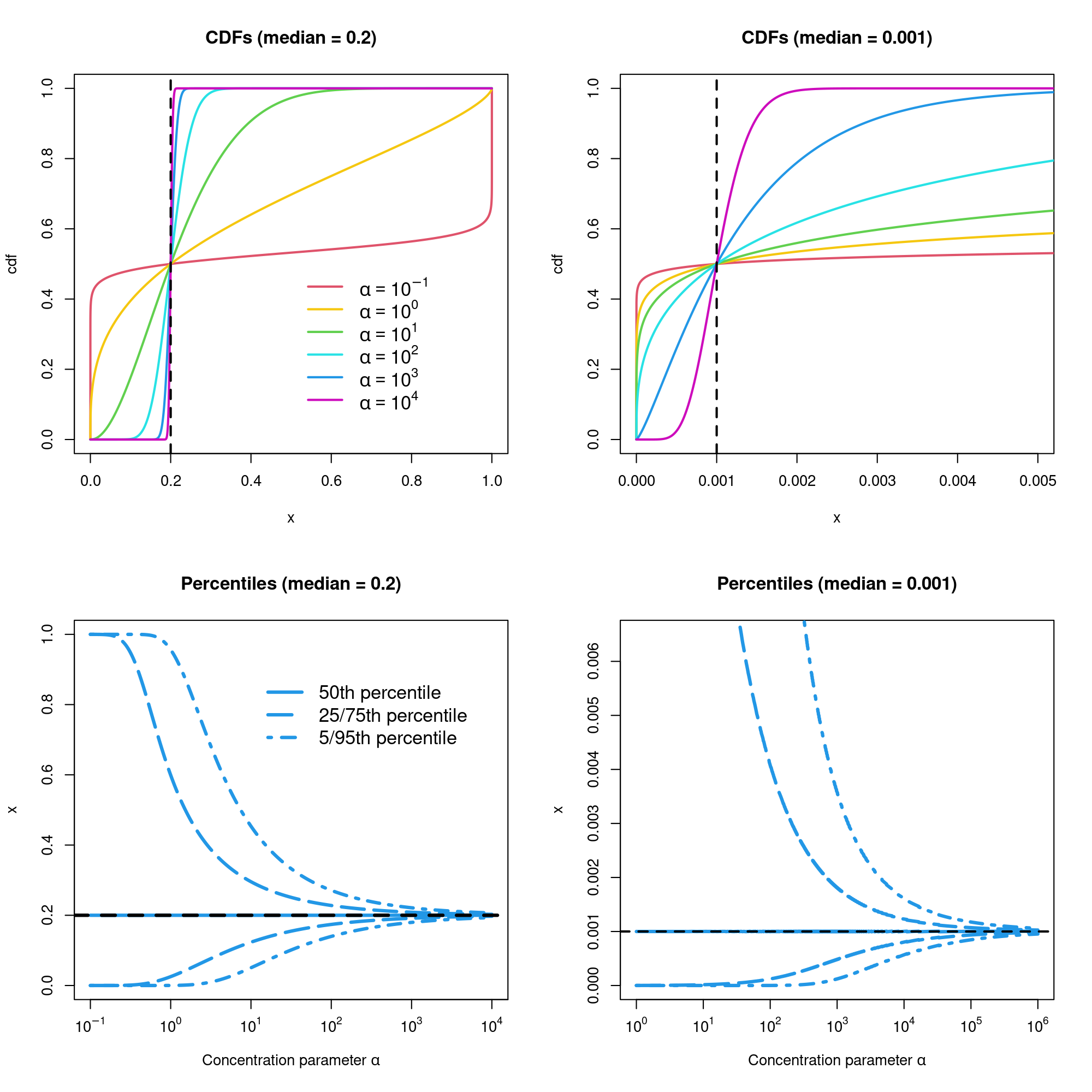}
    \caption{The median method performs similarly to our maximum density method in terms of providing improved control over the scale.  See the captions of \cref{fig:problem,fig:method} for description of the plots.}
    \label{fig:median}
\end{figure}

However, as shown in \cref{fig:logit-median}, the median method puts somewhat less mass near the target location $c$, compared to the maximum density method.
The objective function for median method is also more complicated to optimize, compared to maximizing the density as in our proposed method.
Furthermore, importantly, the median method is not as straightforward to extend to the general Dirichlet case.

\begin{figure}
    \centering
    \includegraphics[trim=0.5cm 0 1.25cm 0, clip, width=0.49\textwidth]{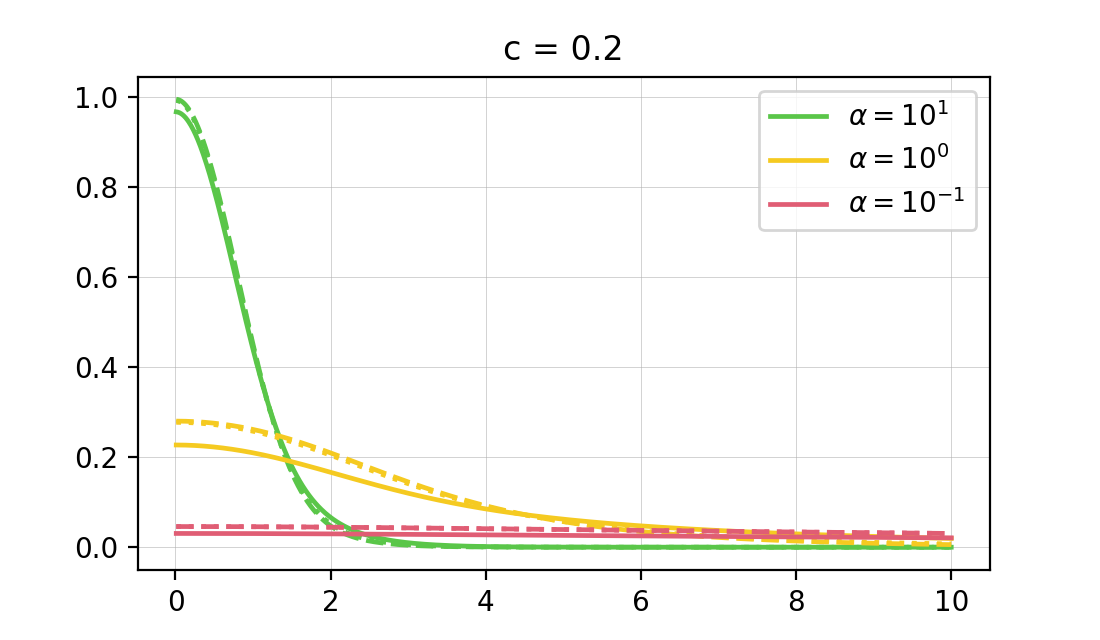}
    \includegraphics[trim=0.5cm 0 1.25cm 0, clip, width=0.49\textwidth]{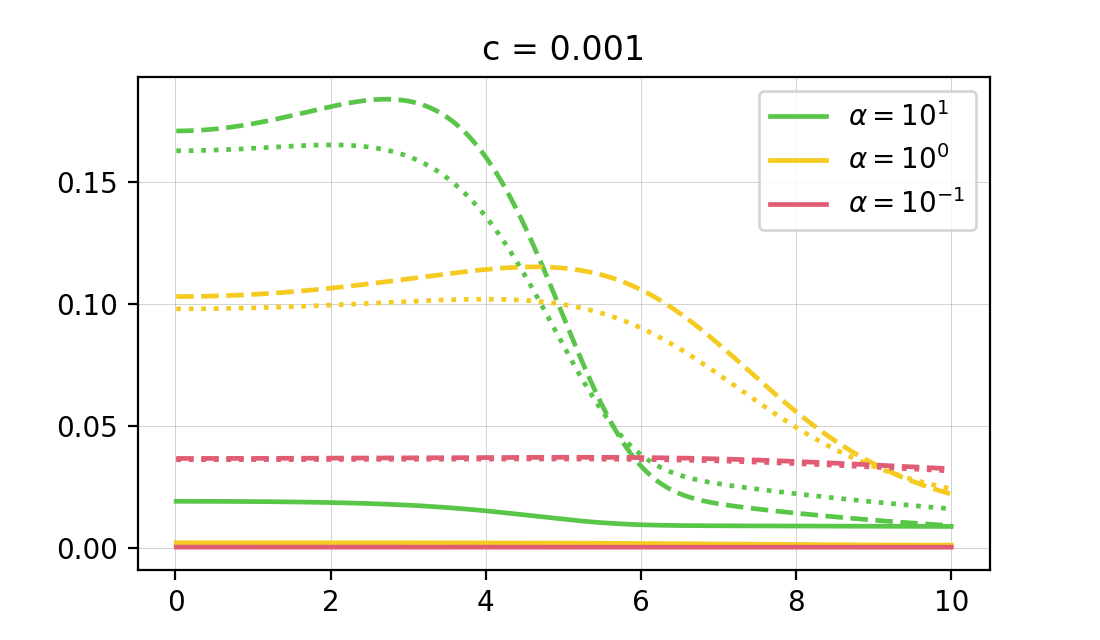}
    \caption{The maximum density method (dashed lines) puts more probability mass near the target location, compared to both the mean method (solid lines) and median method (dotted lines). See the caption of \cref{fig:logit} for description of the plots.  The concentration $\alpha$ is the same for all three methods.  The median and maximum density curves are nearly indistinguishable in the case of $c = 0.2$.}
    \label{fig:logit-median}
\end{figure}

On the Metropolis--Hastings example in \cref{sec:mh-example}, the median method performs similarly to the maximum density method as a technique for constructing MH proposal distributions. \cref{fig:autocorrelation-extended} shows the autocorrelation functions for the mean, median, and maximum density methods for a range of parameter settings. To facilitate comparison with the other methods, we consider proposals with median $c$ and (i) fixed concentration $\alpha = 5$, (ii) fixed variance $v = 0.1$, and (iii) adaptive variance, in other words, $\sigma = \min\{x, 1-x, \sqrt{0.1}\}$.

\begin{figure}
    \centering
    \includegraphics[width=\textwidth]{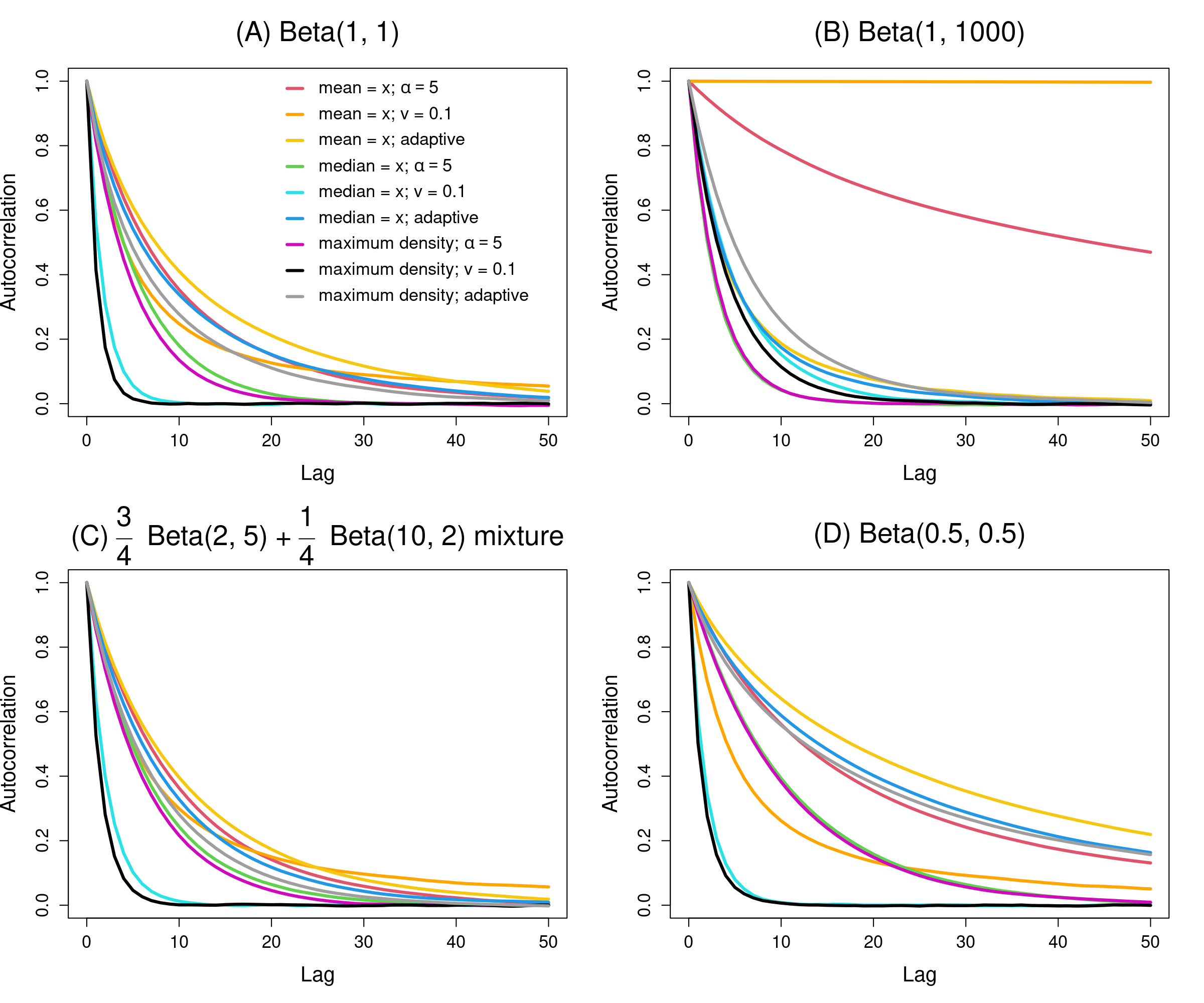}
    \caption{Autocorrelation functions for the mean, median, and maximum density methods of specifying a proposal distribution in the MH example in \cref{sec:mh-example}.  Compare with \cref{fig:autocorrelation}.}
    \label{fig:autocorrelation-extended}
\end{figure}

\section{Metropolis--Hastings example details}
\label{sec:mh-details}

This section provides more detail on the MCMC example in \cref{sec:mh-example}.
For $x\in(0,1)$, define $a_x$ and $b_x$ to be the values of $a$ and $b$ obtained via a given method.
For example, for method II (mean at current value $x$ and concentration parameter $\alpha$), we have $a_x = \alpha x$ and $b_x = \alpha(1-x)$.
For method I (maximum density with target location $x$ and fixed variance $v$), $a_x$ and $b_x$ are the solution to \cref{eq:max_density} with $c = x$ and $h(a,b) = V(a,b)/v - 1$.

The Metropolis--Hastings algorithm then proceeds as follows. Let $\pi(x)$ denote the density of the target distribution, and suppose the current state is $x$.
Sample a proposed value $x'\sim\mathrm{Beta}(a_x,b_x)$.
Compute the acceptance probability
$$ p = \min\bigg\{1, \; \frac{\pi(x') \, \mathrm{Beta}(x\mid a_{x'},b_{x'})}{\pi(x) \, \mathrm{Beta}(x'\mid a_{x},b_{x})}\bigg\}. $$
With probability $p$, accept the proposal (so the state becomes $x'$), 
and otherwise, reject the proposal (so the state remains $x$).
We initialize the sampler by setting the initial state to be 0.25, and then each iteration of the MCMC algorithm consists of one MH move as described above.


\cref{fig:mcmc-tuning-alpha,fig:mcmc-tuning-max-density} show the results of tuning the parameters of each proposal distribution to optimize their performance in the MH example in \cref{sec:mh-example}. The plots show the autocorrelation functions (ACFs) for a range of values of $\alpha$ (for method II) and $v$ (for method I).

\begin{figure}
    \centering
    \includegraphics[width=\textwidth]{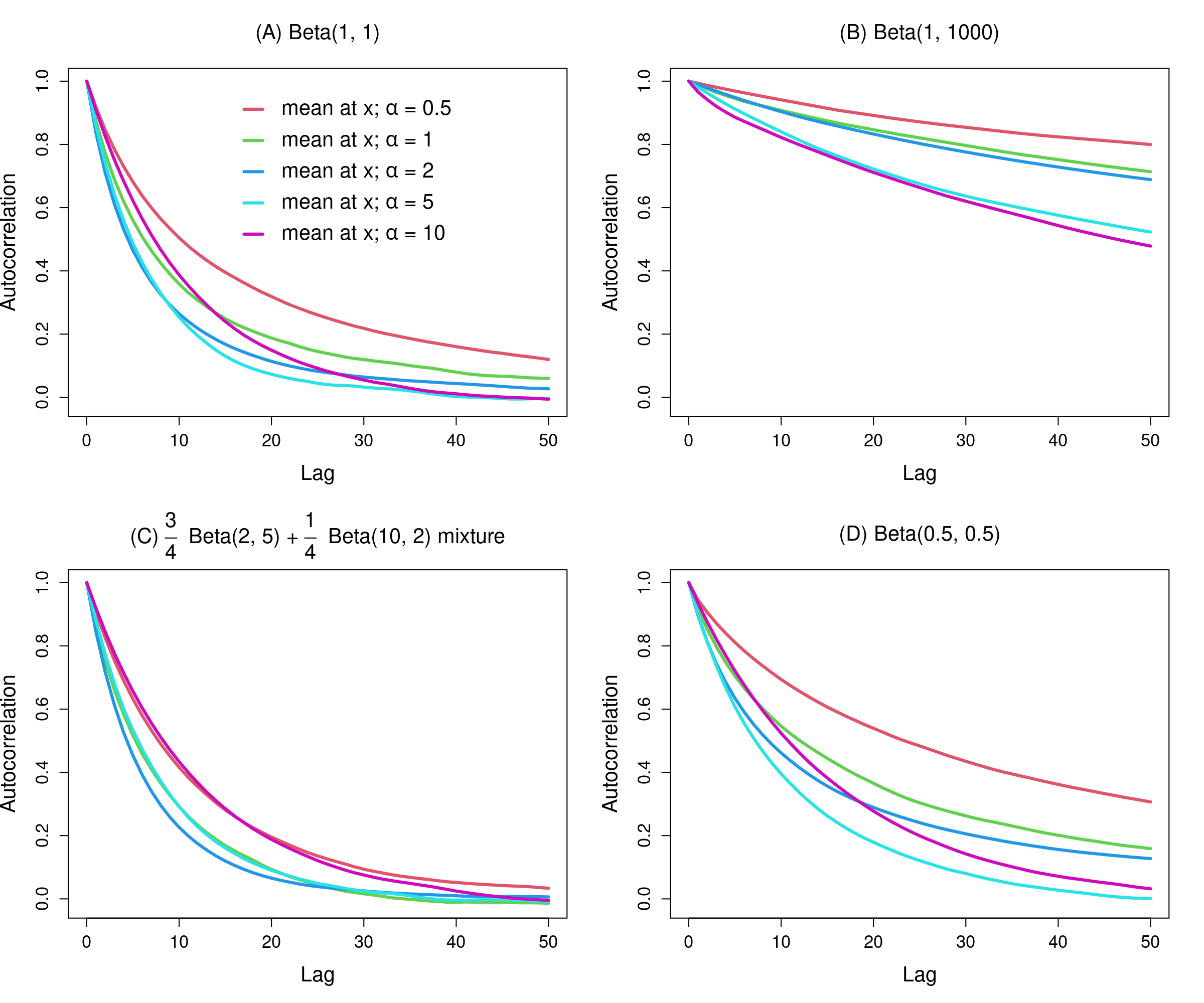}
    \caption{Tuning the concentration parameter of the mean method. Autocorrelation functions for Metropolis--Hasting samplers using proposal distribution II (mean equal to the current state $x$ and fixed concentration parameter $\alpha$) as in \cref{sec:mh-example}.}
    \label{fig:mcmc-tuning-alpha}
\end{figure}

\begin{figure}
    \centering
    \includegraphics[width=\textwidth]{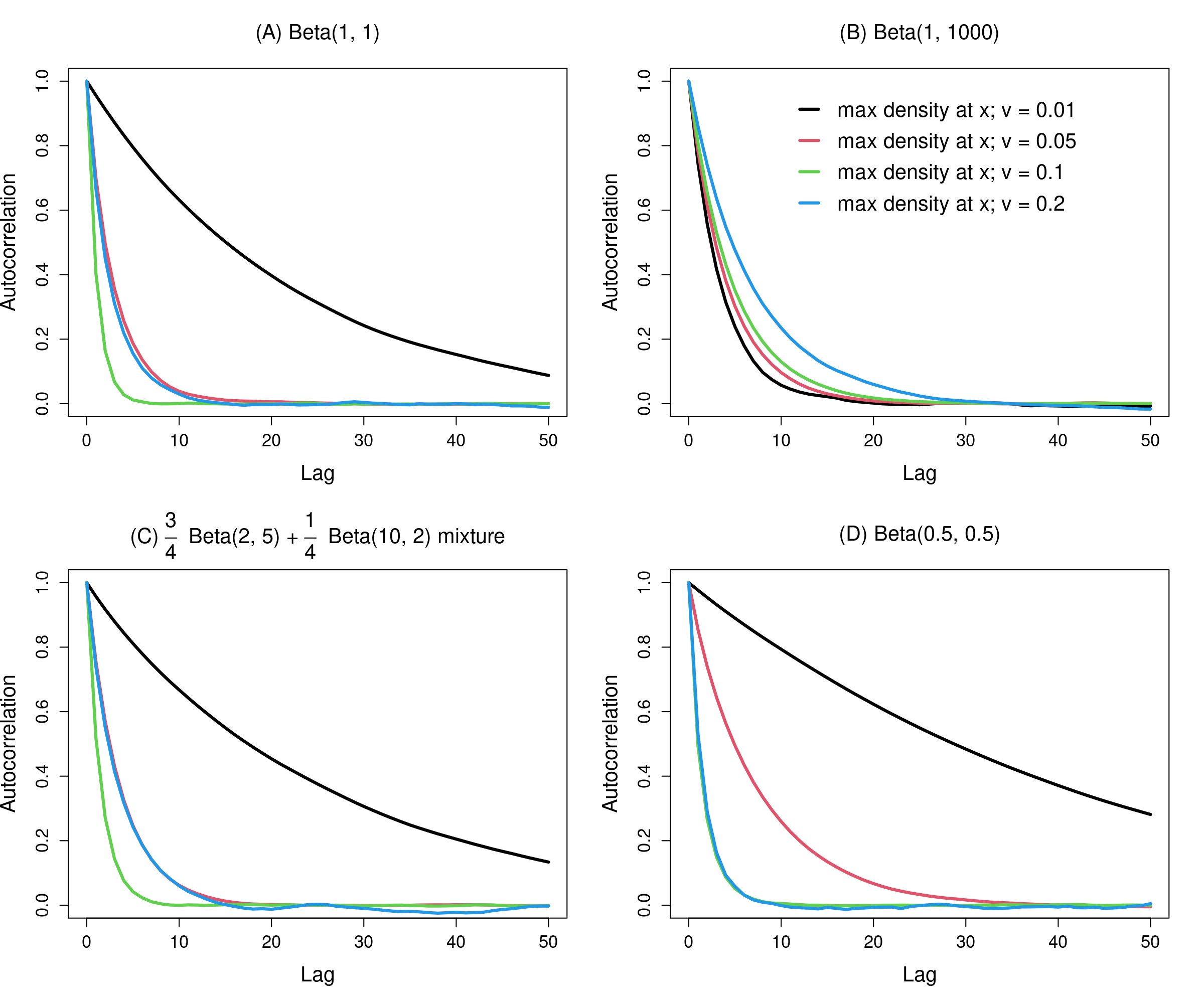}
    \caption{Tuning the variance parameter of the maximum density method. Autocorrelation functions for Metropolis--Hasting samplers using proposal distribution I (maximum density method with target location equal to the current state $x$ and fixed variance $v$) as in \cref{sec:mh-example}.}
    \label{fig:mcmc-tuning-max-density}
\end{figure}

\begin{figure}
    \centering
    \includegraphics[trim=0 0 0.25cm 2.25in, clip, width=0.7\textwidth]{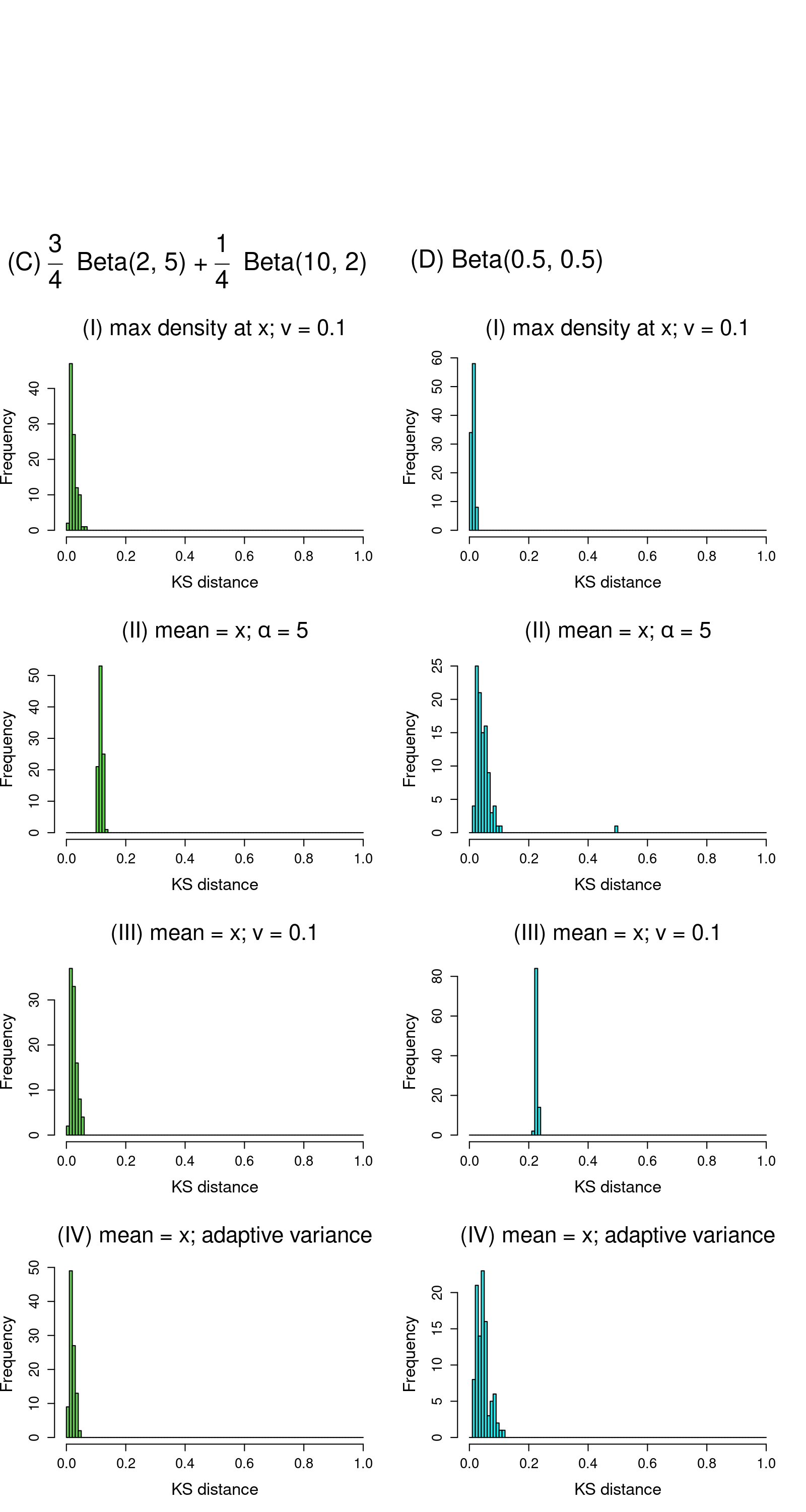}
    \caption{Histograms of the Kolmogorov--Smirnov distance between the target distribution and the MCMC approximation to it. 
    The green histograms in the left column are for the $0.75 \mathrm{Beta}(2,5) + 0.25 \mathrm{Beta}(10,2)$ target distribution (case C); the cyan histograms in the right column are for the $\mathrm{Beta}(0.5, 0.5)$ target distribution (case D).
    As in \cref{fig:ks-distance}, samplers using maximum density proposals (method I) or the mean method with adaptive variance (method IV) converge reliably to the target distribution,   
    while Methods II and III are less reliable depending on the specific target distribution.
    }
    \label{fig:ks-distance-suppl}
\end{figure}

\section{Taylor approximation to cosine error}
\label{sec:taylor-cosine}
The cosine error between points $x$ and $u$ in the probability simplex $\Delta_K$ is 
\begin{equation}
	\mathrm{CosErr}(x,u) = 1 - \frac{u^\mathtt{T} x}{\lVert u \rVert \lVert x \rVert} = 1 - \frac{\sum_i u_i x_i}{\sqrt{\sum_i u_i^2 \sum_i x_i^2}}. 
	\label{eq:cosine-error}
\end{equation}
To make this expression more tractable when taking its expectation over $x$, we employ a Taylor approximation to the cosine error as a function of $x$. To this end, fix $u\in\Delta_K$ and define $f(x) = \mathrm{CosErr}(x,u)$. Since $f$ is uniquely minimized at $x = u$ by the Cauchy--Schwarz inequality, we use the second-order Taylor approximation.  To derive this, we compute the first and second derivatives of $f$, which after simplifying are equal to
\begin{align}
	\frac{\partial f}{\partial x_i} &= - \frac{u_i}{\lVert u \rVert \lVert x \rVert} + \frac{(u^\mathtt{T} x) x_i}{\lVert u \rVert \lVert x \rVert^3} \label{eq:taylor-first-derivatives} \\
\frac{\partial^2 f}{\partial x_i \, \partial x_j} &= \frac{u_i x_j + u_j x_i + (u^\mathtt{T} x)\mathds{1}(i=j)}{\lVert u \rVert \lVert x \rVert^3} - \frac{3 (u^\mathtt{T} x) x_i x_j}{\lVert u \rVert \lVert x \rVert^5}. \label{eq:taylor-second-derivatives}
\end{align}
Let $H = (H_{i j})\in\mathbb{R}^{K\times K}$ denote the Hessian matrix at $x = u$, that is,
\begin{align*}
    H_{i j} = \frac{\partial^2 f}{\partial x_i \, \partial x_j}(u) = \frac{\mathds{1}(i=j)}{\|u\|^2} - \frac{u_i u_j}{\|u\|^4}
\end{align*}
after plugging $x = u$ into \cref{eq:taylor-second-derivatives} and simplifying. Note that $H$ is symmetric and positive semi-definite since $f$ is minimized at $x=u$.
Also, $f(u) = 0$ and  $\partial f/\partial x_i = 0$ at $x = u$ by \cref{eq:cosine-error,eq:taylor-first-derivatives}. 
The second-order Taylor approximation around $u$ is therefore
\begin{align}
\label{eq:taylor-derivation}
	f(x) &\approx f(u) + \frac{1}{2} ( x - u )^\mathtt{T} H ( x - u ) \notag\\
 &= \frac{1}{2} \sum_{i = 1}^K \sum_{j = 1}^K ( x_i - u_i ) H_{i j} ( x_j - u_j ) \\
 &= \frac{1}{2} \sum_{i=1}^K (x_i - u_i)^2 \frac{1}{\|u\|^2} - \frac{1}{2} \sum_{i = 1}^K \sum_{j = 1}^K ( x_i - u_i ) ( x_j - u_j ) \frac{u_i u_j}{\|u\|^4}.  \notag
\end{align}


Now, suppose $X = (X_1, \ldots, X_K) \sim \mathrm{Dirichlet}(\alpha u_1,\ldots,\alpha u_K)$ where $\alpha > 0$.  Then $u = \E(X)$ is the mean of $X$, so taking the expectation of \cref{eq:taylor-derivation} over $X$ yields 
\begin{align}
\label{eq:taylor-var-cov}
    \E\big(f(X)\big) \approx \frac{1}{2\|u\|^2}\sum_{i=1}^K \mathrm{Var}(X_i) - \frac{1}{2\|u\|^4} \sum_{i=1}^K \sum_{j=1}^K u_i u_j \mathrm{Cov}(X_i,X_j).
\end{align}
Plugging in the formulas for the variances and covariances of the entries of a Dirichlet distributed random vector and simplifying, \cref{eq:taylor-var-cov} becomes
\begin{align}  
	\E\big(f(X)\big) \approx \frac{1}{2 (1+\alpha) \lVert u \rVert^2} \bigg( 1 - \frac{\sum_i u_i^3}{\sum_i u_i^2} \bigg). \label{eq:taylor-approx}
\end{align}
Finally, letting $a_i = \alpha u_i$ for $i=1,\ldots,K$, \cref{eq:taylor-approx} can be written as
\begin{align}
\begin{split}
\E(\mathrm{CosErr}(X,\E(X))) = \E(f(X)) &\approx \frac{1}{2 (1 + \alpha) \lVert a/\alpha \rVert^2} \bigg( 1 - \frac{\sum_i (a_i/\alpha)^3}{\sum_i (a_i/\alpha)^2} \bigg) \\
&=  \frac{\sum_i a_i}{2 (1 + \sum_i a_i) (\sum_i a_i^2)} \bigg({\textstyle\sum_i} a_i - \frac{\sum_i a_i^3}{\sum_i a_i^2} \bigg).
\end{split}
\end{align}

\bibliographystyle{ba}
\bibliography{references}

\end{document}